\theoremstyle{plain}
\newtheorem{teo}{Theorem}[section]
\newtheorem{prop}{Proposition}[section]
\newtheorem{lemma}{Lemma}[section]
\newtheorem{cor}{Corollary}[section]
\newtheorem{condition}{Condition}[section]
\newtheorem{definition}{Definition}[section]
\theoremstyle{definition}
\numberwithin{equation}{section}
\newcommand{\z}{\mathbb{Z}}
\title{Nucleation for one-dimensional long-range Ising models}
\author{Aernout C.D. van Enter\\
\footnotesize{ Bernoulli Institute for Mathematics, Computer Science and Artificial Intelligence}\\
\footnotesize{\texttt{avanenter@gmail.com}}\\
\footnotesize{ Groningen University, Nijenborgh 9, 9747AG, Groningen, The Netherlands}\\
\footnotesize{and Delft Institute of Applied Mathematics}\\
\footnotesize{Technische Universiteit Delft, Van Mourik Broekmanweg 6, 2628 XE Delft, The Netherlands}\\
[0.3cm] Bruno Kimura\\
\footnotesize{Delft Institute of Applied Mathematics}\\
\footnotesize{\texttt{bkimura@tudelft.nl}}\\
\footnotesize{Technische Universiteit Delft, Van Mourik Broekmanweg 6, 2628 XE Delft, The Netherlands}\\
[0.3cm] Wioletta Ruszel\\
\footnotesize{Delft Institute of Applied Mathematics}\\
\footnotesize{\texttt{W.M.Ruszel@tudelft.nl}}\\
\footnotesize{Technische Universiteit Delft, Van Mourik Broekmanweg 6, 2628 XE Delft, The Netherlands}\\
[0.3cm] Cristian Spitoni\\
\footnotesize{Institute of Mathematics}\\
\footnotesize{\texttt{C.Spitoni@uu.nl}}\\
\footnotesize{Universiteit Utrecht, Budapestlaan 6, 3584 CD Utrecht, The Netherlands}\\}
\date{\today}
\begin{document}

\maketitle

\begin{abstract}
In this note we study metastability phenomena for a class of long-range Ising models in one-dimension. We prove that, under suitable general conditions, the configuration $\mathbf{-1}$ is the only metastable state and we estimate the mean exit time. 
Moreover,  we illustrate the theory with two examples (exponentially and polynomially decaying interaction) and we show that the critical droplet can  be  \emph{macroscopic} or \emph{mesoscopic}, according to the value of the external magnetic field. 
\end{abstract}

\section{Introduction}
Metastability is a dynamical phenomenon observed in many different contexts, such as physics, chemistry, biology, climatology, economics.
Despite the variety of scientific areas,  the common feature of all these situations is the existence of multiple,
well-separated \emph{time scales}. On short time scales the system is in a
quasi-equilibrium within a single region, while on long time scales it undergoes
rapid transitions between quasi-equilibria in different regions.   A rigorous description of metastability in the setting of  stochastic dynamics is relatively recent, dating back to
the pioneering paper \cite{CGOV}, and has experienced substantial progress in the last decades. See \cite{BL,Bo, BdH,OV} for reviews and for a list of the most important papers on this subject.

One of the big challenges in rigorous study of metastability is understanding the dependence of  the metastable behaviour  and of the nucleation process of the stable phase 
on the dynamics.  The nucleation process of the critical droplet, i.e. the configuration triggering the crossover, has been indeed studied in different dynamical regimes: serial (\cite{BM, CO}) vs. 
parallel dynamics (\cite{BCLS,CN,CNS01}); non-conservative (\cite{BM, CO}) vs. conservative dynamics  (\cite{HNT,HNT1,HOS}); finite (\cite{BHN}) vs. infinite volumes (\cite{BHS}); competition (\cite{CNS02,CNS03,I,SLF}) vs. non-competition of metastable phases (\cite{CN2013,CNS2015}).  All previous studies assumed that the microscopic interaction is of short-range type.

 The present paper pushes further this investigation, studying the dependence of the metastability scenario on the  \emph{range} of the interaction of the model. Long-range Ising models in low dimensions are known to behave like higher-dimensional short-range models. For instance in \cite{Dys, Cas}  (and later generalized by \cite{Picco, WBruno}) it was shown that long-range Ising models undergo a phase transition already in one dimension, and this transition persists in fast enough  decaying fields. Furthermore, Dobrushin interfaces are rigid already in two dimensions for anisotropic long-range Ising models, see \cite{Loren}.

We consider the question: does indeed a  \emph{long-range} interaction change substantially the nucleation process? Are we able to define in this framework a critical configuration triggering the crossover towards the stable phase?
In (\cite{MCAW}) the author already considered the \emph{Dyson-like}  long-range models, i.e. the one-dimensional lattice model of Ising spins with interaction decaying with a power $\alpha$, in a external magnetic field. Despite the long-range potential, the author showed, by \emph{instanton} arguments, that the system has a finite-sized critical droplet.

In this manuscript we want to make rigorous this claim for a general long-range interaction, showing  as well that the long-range interaction completely changes the metastability scenario: in the short--range one-dimensional Ising model a droplet of size one,  already nucleates the stable phase. 
We show instead that for a given external field $h$, and  pair long range potential $J(n)$, we can define a nucleation droplet which gets larger for smaller $h$. For $d=1$ finite range  interactions, inserting a minus interval of size $\ell$ in the plus phase costs a finite energy, which is uniform in the length of the interval, the same is almost true for a fast decaying interaction, as there is a uniform bound on the energy an interval costs.  Thus, for low temperature, there is a diverging timescale and we will talk  in case (maybe by abuse of terminology) of metastability.
The spatial scale of a nucleating interval, however,  defined as an interval which lowers its energy when growing, is finite for finite range interactions, but diverges as  $h\to0$ for infinite range.
The Dyson model has energy and spatial scale of nucleating droplet diverging as $h$ goes to zero. 
We will show that, depending on the value of $h$, the critical droplet can be \emph{macroscopic} or \emph{mesoscopic}.
Roughly speaking, an interval of minuses of length $\ell$ which grows to $\ell+1$ gains energy $2h$, but loses $E_\ell= \sum_{ n=\ell }^{\infty} J(n)$. $E_\ell$ converges to zero as $\ell\rightarrow \infty$, but the smaller $h$ is, the larger the size of the critical droplet. Moreover, by taking $h$ volume-dependent, going to zero with $N$ as $ N^{-\delta}$, one can make the nucleation interval mesoscopic (e.g. $O(N^\delta)$, with 
$\delta\in(0,1)$) or macroscopic (i.e. $O(N)$).

The paper is organised as follows. In Section~2 we describe the lattice model and we give the main definitions; in Section~3 the main results of the paper are stated, while in Section~4 and 5 the proofs of the model-dependent results are given.

\section{The model and main definitions}

 Let $\Lambda$ be a finite interval of $\z$, and let us denote by $h$ a positive external field. Given a configuration 
$\sigma$ in $\Omega_{\Lambda} = \{-1,1\}^{\Lambda}$, we define the  \textit{Hamiltonian} with respect to free boundary condition by
\begin{equation}\label{ham1}
H_{\Lambda, h}(\sigma) = -\sum_{\{i,j\} \subseteq \Lambda} J(|i-j|)\sigma_{i}\sigma_{j} - \sum_{i \in \Lambda}h \sigma_{i}, 	
\end{equation}
where $J: \mathbb{N} \rightarrow \mathbb{R}$, the \textit{pair interaction}, is assumed to be positive and decreasing. The class of interactions that we want to include in the present analysis are of \textit{long-range type}, for instance,
\begin{enumerate}
\item exponential decay: $J(|i-j|)= J \cdot \lambda^{-|i-j|}$  with constants $J>0$ and $\lambda >1$;
\item polynomial decay: $J(|i-j|)= J \cdot |i-j|^{-\alpha}$, where $\alpha>0$ is a parameter.
\end{enumerate} 
The {\it finite-volume Gibbs measure} will be denoted by
\begin{equation}
\label{e:gibbs}
\mu_{\Lambda}(\sigma) = \frac{1}{Z_{\Lambda}} \exp \left(-\beta H_{\Lambda, h}(\sigma)\right ),
\end{equation}
where $\beta>0$ is proportional to the inverse temperature and $Z_{\Lambda}$  is a normalizing constant. The set of \emph{ground states} $\mathscr{X}^{s}$ is defined as 
$\mathscr{X}^{s}: =\textnormal{argmin}_{\sigma\in \Omega_\Lambda} H_{\Lambda, h}(\sigma)$. Note that for the class of interactions considered 
 $\mathscr{X}^{s} = \{\mathbf{+1}\}$,  where $\mathbf{+1}$ {stands for} the configuration with all spins equal to $+1$.
\\
Given an integer $k \in \{0, \dots, \#\Lambda\}$, we consider  $\mathcal{M}_k:=\{\sigma\in \Omega_{\Lambda}:  \# \{i: \sigma_{i} = 1\} = k\}$ consisting of configurations in $\Omega_{\Lambda}$ with $k$
positive spins, and we define the configurations $L^{(k)}$ and $R^{(k)}$ as follows. Let
\begin{equation}
L_{i}^{(k)} =
\begin{cases}
+1 & \text{if $1 \leq i \leq  k $, and}\\
-1 & \text{otherwise,}	
\end{cases}	
\end{equation}
and
\begin{equation}
R_{i}^{(k)} =
\begin{cases}
-1 & \text{if $1 \leq i \leq \# \Lambda-k$, and}\\
+1 & \text{otherwise,}	
\end{cases}	
\end{equation}
i.e.,  the configurations respectively with $k$ positive spins on {\it left} side of the interval and on the {\it right} one. We will show that $L^{(k)}$ and $R^{(k)}$ are the {minimizers} of the energy 
function $H_{\Lambda,h}$ on $\mathcal{M}_k$ (see {Proposition}~\ref{csgo2}). Let us denote {by $\mathcal{P}^{(k)}$ the set $\mathcal{P}^{(k)}:= \{L^{(k)}, R^{(k)}\}$ consisting} of the {minimizers} of the energy on $\mathcal{M}_k$. With abuse of notation we will indicate with 
$H_{\Lambda,h}(\mathcal{P}^{(k)})$ the energy of the elements of the set, {that is, $H_{\Lambda,h}(\mathcal{P}^{(k)}):=H_{\Lambda,h}({L}^{(k)})=H_{\Lambda, h}({R}^{(k)})$.}

We {choose} the evolution of the system {to be described} by a discrete-time Markov chain {$X=(X(t))_{t \geq 0}$}, in particular, we consider the discrete-time serial Glauber dynamics given by the Metropolis weights, i.e.,
{the transition matrix of such dynamics is given by}
$$
p(\sigma,\eta):= c(\sigma,\eta)e^{-\beta[H_{\Lambda,h}(\eta)-H_{\Lambda,h}(\sigma)]_+},
$$
where  $[\cdot]_+$ denotes the positive part, and $c(\cdot,\cdot)$ is its connectivity matrix {that is equal to
$1/|\Lambda|$ in case the two configurations $\sigma$ and $\eta$ coincide up to the value of a single spin, and zero otherwise.} Notice that such dynamics is reversible with respect to the Gibbs measure defined in 
(\ref{e:gibbs}). Let us define the \emph{hitting time} {$\tau_{\eta}^{\sigma}$} of a configuration $\eta$ of the chain  $X$ started at $\sigma$ as
\begin{equation}
\label{e:hit}
{\tau_{\eta}^{\sigma}:=\inf\{t>0: X(t)=\eta\}.}
\end{equation}

For any positive integer $n$, {a sequence $\gamma = (\sigma^{(1)}, \dots, \sigma^{(n)})$
such that $\sigma^{(i)}\in \Omega_\Lambda $  and $c(\sigma^{(i)},\sigma^{(i+1)})>0$} for all $i=1,\dots,n-1$
is called a \emph{path} joining {$\sigma^{(1)}$ to $\sigma^{(n)}$};
we also say that $n$ is the length of the path.
For any path {$\gamma$} of length $n$, we let
\begin{equation}
\label{height}
{\Phi_\gamma :=\max_{i=1,\dots,n} H_{\Lambda, h}(\sigma^{(i)})}
\end{equation}
be the \emph{height} of the path. {We also define
the \emph{communication height}
between $\sigma$ and $\eta$ by
\begin{equation}
\label{communication}
\Phi(\sigma,\eta)
:=
\min_{\gamma\in\Omega(\sigma,\eta)}
\Phi_\gamma,
\end{equation}
where the minimum is restricted to the set $\Omega(\sigma,\eta)$ of all paths joining $\sigma$ to $\eta$.}
By reversibility, it easily follows that
\begin{equation}
\label{rev02}
\Phi(\sigma,\eta)=\Phi(\eta,\sigma)
\end{equation}
for all $\sigma,\eta\in \Omega_\Lambda$.
{We extend the previous definition for sets $\mathcal{A},\mathcal{B}\subseteq \Omega_\Lambda$ by letting}
\begin{equation}
\label{communication-set}
\Phi(\mathcal{A},\mathcal{B})
:=
{\min_{\gamma\in\Omega(\mathcal{A},\mathcal{B})}\Phi_\gamma}
=
\min_{\sigma\in \mathcal{A},\eta\in \mathcal{B}}\Phi(\sigma,\eta),
\end{equation}
where {$\Omega(\mathcal{A},\mathcal{B})$ denotes} the set of paths joining
a state in $\mathcal{A}$ to a state in $\mathcal{B}$.
The {\it communication cost} of passing from 
$\sigma$ to $\eta$ is given by the quantity $\Phi(\sigma,\eta)-H_{\Lambda, h}(\sigma)$.
{Moreover, if we define $\mathscr{I}_\sigma$ as the set of all states $\eta$ in $\Omega_\Lambda$
such that $H_{\Lambda, h}(\eta)< H_{\Lambda, h}(\sigma)$, then the
\emph{stability level} of any $\sigma\in \Omega_\Lambda \setminus \mathscr{X}^{s}$ is given by
\begin{equation}
\label{stability}
V_\sigma:=\Phi(\sigma,\mathscr{I}_\sigma)-H_{\Lambda, h}(\sigma)
\ge 0.
\end{equation}
}
Following \cite{MNOS}, we now introduce the notion of
\emph{maximal stability level}.
{Assuming that} $\Omega_\Lambda\setminus \mathscr{X}^{s}\neq\emptyset$, we let
the \emph{maximal stability level} be
\begin{equation}
\label{gamma}
\Gamma_\textnormal{m}:=\sup_{\sigma\in \Omega_\Lambda\setminus \mathscr{X}^{s}}V_\sigma.
\end{equation}
We give the following definition.
\begin{definition}
\label{def1}
 We call metastable set $\mathscr{X}^{m}$, the set
\begin{equation}
\label{metastabile}
\mathscr{X}^{m}
:=
\{\sigma\in \Omega_\Lambda\setminus \mathscr{X}^{s}:\,V_\sigma=\Gamma_{\textnormal{m}}\}.
\end{equation}
\end{definition}

Following \cite{MNOS}, we shall call
$\mathscr{X}^{m}$ the set of \emph{metastable} states of the
system {and refer to each of its elements as \emph{metastable}.}
We denote {by} $\Gamma$ the quantity
\begin{equation}
\label{e:gamma}
{\Gamma:=\max_{k=0, \dots, \#\Lambda} H_{\Lambda,h}(\mathcal{P}^{(k)})- H_{\Lambda,h}(\mathbf{-1}).}
\end{equation}
We will show in Corollary~\ref{t:meta} that {under certain assumptions} $\Gamma=\Gamma_\textnormal{m}$.

\section{Main Results}
\subsection{Mean exit time}
In this section we will study the first hitting time of the configuration $\mathbf{+1}$ when the system is prepared in $\mathbf{-1}$, in the limit $\beta\to \infty$. 
We will restrict our analysis to the case given by the following condition.
\begin{condition}
\label{c:condition}
{Let $N$ be an integer such that $N \geq 2$. We consider $\Lambda = \{1, \dots, N\}$ and $h$ such that
\begin{equation}\label{field}
0 < h < \sum_{n = 1}^{N-1}J(n).
\end{equation} }
\end{condition}
By using the general theory developed in \cite{MNOS}, we need first to solve two \emph{model-dependent} problems: the calculation of the \emph{minimax} 
between {$\mathbf{-1}$} and $\mathbf{+1}$ ({item} \ref{minmax1} of Theorem~\ref{t:minmax})
and the proof of a 
 \emph{recurrence} property in the energy landscape ({item} \ref{minmax2}  of Theorem~\ref{t:minmax}).
\begin{teo}
\label{t:minmax}
Assume that Condition~\ref{c:condition} is satisfied.{Then, we have}
\begin{enumerate}
\item $\Phi(-\mathbf{1},\mathbf{+1})=\Gamma+H_{\Lambda,h}(\mathbf{-1})$, \label{minmax1}
\item {$V_{\mathbf{-1}}= \Gamma > 0$}, and \label{minmax3}
\item $V_\sigma<\Gamma$ for any {$\sigma\in \Omega_\Lambda\setminus \{\mathbf{-1}, \mathbf{+1}\}$.\label{minmax2}}
\end{enumerate}
\end{teo}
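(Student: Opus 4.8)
The backbone of the whole argument is the behaviour of the function $g(k):=H_{\Lambda,h}(\mathcal{P}^{(k)})=\min_{\sigma\in\mathcal M_k}H_{\Lambda,h}(\sigma)$, which by Proposition~\ref{csgo2} is realised along the monotone sequence $\mathbf{-1}=L^{(0)},L^{(1)},\dots,L^{(N)}=\mathbf{+1}$. First I would compute the increment obtained by flipping spin $k+1$ in $L^{(k)}$; using the single-flip energy change $2\sigma_i(\kappa_i+h)$ with $\kappa_i:=\sum_{j\ne i}J(|i-j|)\sigma_j$, a direct calculation gives $g(k+1)-g(k)=-2h-2\sum_{m=1}^{k}J(m)+2\sum_{m=1}^{N-k-1}J(m)$. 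Since $J$ is positive and decreasing, this increment is strictly decreasing in $k$, so $g$ is strictly concave on $\{0,\dots,N\}$; under Condition~\ref{c:condition} the first increment is positive, while $g(N)=H_{\Lambda,h}(\mathbf{-1})-2Nh<H_{\Lambda,h}(\mathbf{-1})$, so $g$ is strictly unimodal, increasing up to a unique peak index $k^\star\in\{1,\dots,N-1\}$ and then decreasing. This yields $\Gamma=g(k^\star)-H_{\Lambda,h}(\mathbf{-1})>0$ and isolates the elementary geometric fact I will use repeatedly: along any path the number of positive spins changes by at most one per step, hence a path from $\mathcal M_a$ to $\mathcal M_b$ visits $\mathcal M_{k}$ for every intermediate $k$, where the energy is at least $g(k)$.

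For item~\ref{minmax1} I would prove two inequalities. The upper bound comes from the reference path $L^{(0)},\dots,L^{(N)}$, whose height is $\max_k g(k)=g(k^\star)$, giving $\Phi(\mathbf{-1},\mathbf{+1})\le g(k^\star)=\Gamma+H_{\Lambda,h}(\mathbf{-1})$. For the matching lower bound the crossing fact forces any path from $\mathbf{-1}\in\mathcal M_0$ to $\mathbf{+1}\in\mathcal M_N$ to contain a configuration in $\mathcal M_{k^\star}$, of energy $\ge g(k^\star)$, so its height is $\ge g(k^\star)$. Item~\ref{minmax3} follows the same template: a configuration with $k$ positive spins has energy $\ge g(k)\ge H_{\Lambda,h}(\mathbf{-1})$ whenever $k$ lies below the index $k_0>k^\star$ at which $g$ crosses the level $H_{\Lambda,h}(\mathbf{-1})$, so $\mathscr I_{\mathbf{-1}}\subseteq\bigcup_{k\ge k_0}\mathcal M_k$. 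The truncated reference path $L^{(0)},\dots,L^{(k_0)}$ reaches $L^{(k_0)}\in\mathscr I_{\mathbf{-1}}$ at height $g(k^\star)$ (upper bound), while the crossing fact makes every path into $\mathscr I_{\mathbf{-1}}$ pass through $\mathcal M_{k^\star}$ (lower bound); hence $\Phi(\mathbf{-1},\mathscr I_{\mathbf{-1}})=g(k^\star)$ and $V_{\mathbf{-1}}=\Gamma$.

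Item~\ref{minmax2} is the substantive part. A first, soft reduction: from any $\sigma$ I run a steepest-descent path (single flips, energy non-increasing); it reaches a local minimum $\widehat\sigma$ with $H_{\Lambda,h}(\widehat\sigma)\le H_{\Lambda,h}(\sigma)$, and if the descent is ever strict we have exhibited a state of $\mathscr I_\sigma$ reachable at height $H_{\Lambda,h}(\sigma)$, so $V_\sigma=0<\Gamma$. Hence it suffices to bound $V_\sigma$ for $\sigma$ that is itself a local minimum and $\ne\mathbf{-1},\mathbf{+1}$, and I would control these through the dichotomy provided by $g$. If $H_{\Lambda,h}(\sigma)>H_{\Lambda,h}(\mathbf{-1})$, I aim to produce a growing path (flipping $-1$'s to $+1$'s) from $\sigma$ to $\mathbf{+1}\in\mathscr I_\sigma$ of height $\le g(k^\star)$, whence $V_\sigma\le g(k^\star)-H_{\Lambda,h}(\sigma)<g(k^\star)-H_{\Lambda,h}(\mathbf{-1})=\Gamma$. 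If instead $H_{\Lambda,h}(\sigma)\le H_{\Lambda,h}(\mathbf{-1})$, then $H_{\Lambda,h}(\sigma)\ge g(k)$ together with strict unimodality forces the number of $+1$'s of $\sigma$ to exceed $k^\star$, so the growing path can be kept at plus-count $>k^\star$, never revisiting the bottleneck $\mathcal M_{k^\star}$; here the target is to reach $\mathbf{+1}$ at height $\le H_{\Lambda,h}(\sigma)$, giving $V_\sigma=0$.

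I expect the main obstacle to be exactly the reachability estimate underlying both branches: since $\sigma$ need not be one of the explicit minimizers $L^{(k)},R^{(k)}$, one cannot merely invoke the reference path, and one must grow or erode the interfaces of $\sigma$ — possibly several plus- and minus-blocks — while keeping the energy below the stated level. The natural tool is again the single-flip change $2\sigma_i(\kappa_i+h)$ and the monotonicity of $J$: at a descending interface $\sigma_q=+1,\sigma_{q+1}=-1$ of a single plus-block in a minus background one checks $\kappa_{q+1}>\kappa_q$, which shows such a configuration is not a local minimum because one of the two interface flips strictly lowers the energy. The hard point is extending this interface comparison from the single-block case to multi-block configurations, where plus-mass on the far side can reverse the inequality; resolving this — ideally by proving that $\mathbf{-1}$ and $\mathbf{+1}$ are the only local minima under Condition~\ref{c:condition}, which would collapse item~\ref{minmax2} to the soft descent argument, or else by a careful block-merging construction bounding the height by $\max\{H_{\Lambda,h}(\sigma),g(k^\star)\}$ — is where I would concentrate the effort.
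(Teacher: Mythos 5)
Your treatment of items \ref{minmax1} and \ref{minmax3} is sound. Item \ref{minmax1} is essentially the paper's argument (reference path $L^{(0)},\dots,L^{(N)}$ for the upper bound, the crossing of $\mathcal{M}_{k^\star}$ combined with Theorem~\ref{csgo2} for the lower bound). For item \ref{minmax3} you take a genuinely different and arguably cleaner route: you locate $\mathscr{I}_{\mathbf{-1}}$ entirely beyond the barrier, i.e.\ $\mathscr{I}_{\mathbf{-1}}\subseteq\bigcup_{k\ge k_0}\mathcal{M}_k$ with $k_0>k^\star$, and conclude by the same bottleneck argument; the paper instead concatenates an arbitrary path $\mathbf{-1}\to\eta$ with a strictly descending path $\eta\to\mathbf{+1}$ to show $\Phi(\mathbf{-1},\mathbf{+1})\le\Phi(\mathbf{-1},\eta)$ for every $\eta\in\mathscr{I}_{\mathbf{-1}}$. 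Your version has the advantage of not needing the local-minima classification for this item.

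The genuine gap is in item \ref{minmax2}. You correctly identify that everything reduces to the claim that $\mathbf{-1}$ and $\mathbf{+1}$ are the only single-flip local minima (and you are right that this collapses item \ref{minmax2} completely: a strict downhill neighbour of $\sigma$ lies in $\mathscr{I}_\sigma$ and is reached at height $H_{\Lambda,h}(\sigma)$, so $V_\sigma=0<\Gamma$). But you only verify this claim for a single plus-block in a minus background and explicitly leave the general, multi-block case open, conjecturing that ``plus-mass on the far side can reverse the inequality'' and that a block-merging construction might be needed. That is precisely the model-dependent content of the theorem, so as written the proof is incomplete. The paper closes this by a short argument that needs no block decomposition at all: write $\sigma\ne\pm\mathbf{1}$ so that its leftmost maximal constant block has length $k$ (say $\sigma_i=+1$ for $i\le k$ and $\sigma_{k+1}=-1$; the other case follows by the global spin flip). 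If neither flipping spin $k$ nor spin $k+1$ lowers the energy, summing the two inequalities $\sigma_k(\kappa_k+h)\ge0$ and $\sigma_{k+1}(\kappa_{k+1}+h)\ge0$ cancels $h$ and leaves
\begin{equation*}
0\;\le\;-J(k)-J(1)+\sum_{i=k+2}^{N}\bigl(J(i-k)-J(i-k-1)\bigr)\sigma_i\;\le\;-J(k)-J(1)+\sum_{i=1}^{N-k-1}\bigl(J(i)-J(i+1)\bigr)\;=\;-J(k)-J(N-k)\;<\;0,
\end{equation*}
where the second inequality bounds each $\sigma_i$ by $\pm1$ using $J(i-k-1)\ge J(i-k)$ and the last step is a telescoping sum. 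The monotonicity of $J$ thus controls the worst-case contribution of the arbitrary configuration to the right of the interface, so the feared reversal cannot occur and no merging construction is needed. Supplying this lemma (or an equivalent) is what your proposal is missing.
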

As a corollary we have that $-\mathbf{1}$ is the only metastable state for this model.
\begin{cor}
\label{t:meta}
Assume that Condition~\ref{c:condition} is satisfied. {It follows that 
\begin{equation}
\Gamma = \Gamma_m ,	
\end{equation}
and
\begin{equation}
\mathscr{X}^{m}=\{-\mathbf{1}\}.	
\end{equation}
}
\end{cor}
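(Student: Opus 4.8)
The plan is to derive both equalities directly from Theorem~\ref{t:minmax}, which has already done all the model-dependent work. The corollary is essentially a bookkeeping exercise that packages the three items of the theorem into statements about the maximal stability level $\Gamma_{\textnormal m}$ and the metastable set $\mathscr{X}^{m}$.

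First I would establish $\Gamma = \Gamma_{\textnormal m}$. Recall from \eqref{gamma} that $\Gamma_{\textnormal m} = \sup_{\sigma \in \Omega_\Lambda \setminus \mathscr{X}^{s}} V_\sigma$, where $\mathscr{X}^{s} = \{\mathbf{+1}\}$. By item~\ref{minmax3} of Theorem~\ref{t:minmax} we have $V_{\mathbf{-1}} = \Gamma$, so $\Gamma_{\textnormal m} \geq \Gamma$. For the reverse inequality, I would note that every $\sigma \in \Omega_\Lambda \setminus \mathscr{X}^{s}$ is either $\mathbf{-1}$, for which $V_{\mathbf{-1}} = \Gamma$, or lies in $\Omega_\Lambda \setminus \{\mathbf{-1}, \mathbf{+1}\}$, for which item~\ref{minmax2} gives $V_\sigma < \Gamma$. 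Taking the supremum over all such $\sigma$ yields $\Gamma_{\textnormal m} \leq \Gamma$, and hence $\Gamma_{\textnormal m} = \Gamma$. The only mild subtlety is that $V_\sigma$ is defined in \eqref{stability} only for $\sigma \notin \mathscr{X}^{s}$, but since $\mathscr{X}^{s} = \{\mathbf{+1}\}$ this set is exactly the index set appearing in \eqref{gamma}, so the supremum ranges precisely over the configurations covered by items~\ref{minmax3} and~\ref{minmax2}.

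For the second claim, I would unfold the definition \eqref{metastabile}, namely $\mathscr{X}^{m} = \{\sigma \in \Omega_\Lambda \setminus \mathscr{X}^{s} : V_\sigma = \Gamma_{\textnormal m}\}$. Since we have just shown $\Gamma_{\textnormal m} = \Gamma$, a configuration $\sigma \neq \mathbf{+1}$ belongs to $\mathscr{X}^{m}$ if and only if $V_\sigma = \Gamma$. By item~\ref{minmax3} the configuration $\mathbf{-1}$ satisfies this, so $\mathbf{-1} \in \mathscr{X}^{m}$; by item~\ref{minmax2} no other $\sigma \in \Omega_\Lambda \setminus \{\mathbf{-1}, \mathbf{+1}\}$ does, since $V_\sigma < \Gamma$ there. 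Therefore $\mathscr{X}^{m} = \{\mathbf{-1}\}$.

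I do not anticipate any genuine obstacle here, since the heavy lifting resides in Theorem~\ref{t:minmax}; the corollary merely requires that the three items exhaust and classify every configuration outside $\mathscr{X}^{s}$. The one point demanding care is ensuring the case distinction is complete: one must check that $\Omega_\Lambda \setminus \mathscr{X}^{s}$ is partitioned as $\{\mathbf{-1}\} \cup \bigl(\Omega_\Lambda \setminus \{\mathbf{-1}, \mathbf{+1}\}\bigr)$, which is immediate, and that $\Gamma > 0$ (guaranteed by item~\ref{minmax3}) so that $\mathbf{-1}$ is genuinely metastable rather than trivially stable.
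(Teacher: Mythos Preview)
Your proposal is correct and is precisely the argument the paper intends: the corollary is stated without a separate proof in the paper, being treated as an immediate consequence of items~\ref{minmax3} and~\ref{minmax2} of Theorem~\ref{t:minmax}, and your write-up simply makes that deduction explicit.
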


Therefore, the asymptotic of the exit time for the system
started at the metastable states {is given by the following theorem.}
\begin{teo}
\label{t:meantime}
Assume that Condition~\ref{c:condition} is satisfied. {It follows that}
\begin{enumerate}
\item for any $\epsilon>0$
$$
{\lim_{\beta\to\infty} \mathbb{P}\left(e^{\beta(\Gamma-\epsilon)}<\tau_{\mathbf{+1}}^{\mathbf{-1}}<e^{\beta(\Gamma+\epsilon)}\right)=1,}
$$
\item the limit
$$
{\lim_{\beta\to\infty}\frac{1}{\beta}\log\left(\mathbb{E}\left(\tau_{\mathbf{+1}}^{\mathbf{-1}}\right)\right)=\Gamma}
$$
holds.	
\end{enumerate}
\end{teo}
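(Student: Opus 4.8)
The plan is to derive Theorem~\ref{t:meantime} as a direct application of the general theory of \cite{MNOS}, whose abstract results express the sharp asymptotics of the tunnelling time from a metastable state to the ground states purely in terms of the maximal stability level $\Gamma_\textnormal{m}$. As noted just before Theorem~\ref{t:minmax}, the two \emph{model-dependent} inputs required by that theory are precisely the minimax computation and the recurrence property in the energy landscape, and both have already been secured. My task therefore reduces to matching the quantities appearing in \cite{MNOS} with those of the present model and to checking that the corresponding hypotheses hold.

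First I would record the identifications. By Corollary~\ref{t:meta} the metastable set is $\mathscr{X}^m=\{\mathbf{-1}\}$ and $\Gamma_\textnormal{m}=\Gamma$, while by the definition of the model the set of ground states is $\mathscr{X}^s=\{\mathbf{+1}\}$. Hence $\tau_{\mathbf{+1}}^{\mathbf{-1}}$ is exactly the tunnelling time from the unique metastable state to the unique stable state, to which the abstract theorem applies. Item~\ref{minmax1} of Theorem~\ref{t:minmax} yields the communication cost $\Phi(\mathbf{-1},\mathbf{+1})-H_{\Lambda,h}(\mathbf{-1})=\Gamma$, so the energetic barrier to be crossed equals $\Gamma$, and item~\ref{minmax3} gives $V_{\mathbf{-1}}=\Gamma>0$, confirming that $\mathbf{-1}$ is genuinely metastable at the maximal level.

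The hypothesis that requires real work is the recurrence property, and this is supplied by item~\ref{minmax2}: since $V_\sigma<\Gamma$ for every $\sigma\in\Omega_\Lambda\setminus\{\mathbf{-1},\mathbf{+1}\}$, no configuration other than the metastable and the stable states can trap the dynamics at the scale $e^{\beta\Gamma}$, so that starting from anywhere the chain descends to $\{\mathbf{-1},\mathbf{+1}\}$ on a strictly shorter time scale. With these ingredients the probabilistic statement of Theorem~\ref{t:meantime} follows from the general theory: the upper bound $\tau_{\mathbf{+1}}^{\mathbf{-1}}<e^{\beta(\Gamma+\epsilon)}$ comes from the existence of a path of height $\Gamma+H_{\Lambda,h}(\mathbf{-1})$, which the chain traverses with probability tending to one within that time, while the lower bound $\tau_{\mathbf{+1}}^{\mathbf{-1}}>e^{\beta(\Gamma-\epsilon)}$ holds because every trajectory must overcome a barrier of at least $\Gamma$.

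Finally I would deduce the logarithmic equivalence for the expectation from the concentration estimate just described together with the exponential moment bounds that the framework of \cite{MNOS} provides, which guarantee that rare long excursions do not contribute to $\frac{1}{\beta}\log\mathbb{E}(\tau_{\mathbf{+1}}^{\mathbf{-1}})$. I expect the main obstacle to be one of bookkeeping rather than of analysis: ensuring that the recurrence estimate of item~\ref{minmax2} is phrased in exactly the quantitative form demanded by \cite{MNOS}, namely recurrence to $\{\mathbf{-1},\mathbf{+1}\}$ within time $e^{\beta(\Gamma-\delta)}$ for a suitable $\delta>0$, after which the conclusion is immediate from the cited theorem.
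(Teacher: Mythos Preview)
Your proposal is correct and follows essentially the same approach as the paper: both derive Theorem~\ref{t:meantime} by feeding the model-dependent inputs of Theorem~\ref{t:minmax} (and hence Corollary~\ref{t:meta}) into the abstract results of \cite{MNOS}. The paper is even more terse, simply citing Theorem~4.1 of \cite{MNOS} for item~1 and Theorem~4.9 of \cite{MNOS} for item~2, whereas you spell out the identifications and the role of the recurrence estimate in slightly more detail.
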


Once the model-dependent results in Theorem~\ref{t:minmax} have been proven, the proof of Theorem~\ref{t:meantime} easily follows from the general 
theory present in \cite{MNOS}: {item} 1  follows from Theorem~4.1 in \cite{MNOS} and {item} 2 from Theorem~4.9 in \cite{MNOS}.

\subsection{Nucleation of the metastable phase}
We are going to show that for small enough external magnetic field, the size of the critical droplet is a macroscopic fraction of the system, while for  $h$ sufficiently large,  the critical configuration will be a {mesoscopic} fraction of the system.

Let us define
$L := \left\lfloor \frac{N}{2} \right\rfloor$, and {let $h_{k}^{(N)}$ be}
\begin{equation}
{h_{k}^{(N)} := \sum_{n=1}^{N-k-1} J(n) - \sum_{n=1}^{k} J(n)}   
\end{equation}
for each $k = 0,\dots, L-1$. {One can easily verify that 
\begin{equation}
0 < h_{L-1}^{(N)} < \dots < h_{1}^{(N)} < h_{0}^{(N)} = \sum_{n = 1}^{N-1}J(n)
\end{equation}
}
\begin{prop}\label{critdrop}
{Under the assumption that Condition (\ref{c:condition}) is satisfied, one of the following conditions holds.
\begin{enumerate}
 \item Case $h < h_{L-1}^{(N)}$, we have 
 \[H_{\Lambda,h}(\mathcal{P}^{(L)}) > \max_{\substack{0 \leq k \leq N \\ k \neq L}} H_{\Lambda,h}(\mathcal{P}^{(k)}).\]
 
 \item Case $h_{k}^{(N)} < h < h_{k-1}^{(N)}$ for some $k \in \{1,\dots, L-1\}$, we have  
 \[H_{\Lambda,h}(\mathcal{P}^{(k)}) > \max_{\substack{0 \leq i \leq N \\ i \neq \bar{k}}} H_{\Lambda,h}(\mathcal{P}^{(i)}).\]

 \item Case $h = h_{k}^{(N)}$ for some $k \in \{1,\dots, L-1\}$, we have 
 \[H_{\Lambda,h}(\mathcal{P}^{(k)}) = H_{\Lambda,h}(\mathcal{P}^{(k+1)}) > \max_{\substack{0 \leq i \leq N \\ i \neq k, i \neq k+1}} H_{\Lambda,h}(\mathcal{P}^{(i)}).\]
\end{enumerate}
}
\end{prop}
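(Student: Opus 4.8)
The plan is to reduce everything to a single exact computation of the energy increment along the sequence $\mathcal{P}^{(0)}, \mathcal{P}^{(1)}, \dots, \mathcal{P}^{(N)}$ and then to read off the maximizer from the monotonicity of the thresholds $h_k^{(N)}$. First I would exploit that, by definition, $L^{(k+1)}$ is obtained from $L^{(k)}$ by flipping the single spin at site $k+1$ from $-1$ to $+1$, so that $c(L^{(k)},L^{(k+1)})>0$ and the two configurations differ only there. Writing the local field at site $k+1$ in the configuration $L^{(k)}$ as
$$
\sum_{j \neq k+1} J(|k+1-j|)\, L_j^{(k)} + h
= \sum_{n=1}^{k} J(n) - \sum_{n=1}^{N-k-1} J(n) + h
= h - h_k^{(N)},
$$
a one-spin-flip computation gives
$$
H_{\Lambda,h}(\mathcal{P}^{(k+1)}) - H_{\Lambda,h}(\mathcal{P}^{(k)})
= H_{\Lambda,h}(L^{(k+1)}) - H_{\Lambda,h}(L^{(k)})
= 2\bigl(h_k^{(N)} - h\bigr),
$$
valid for every $k = 0, \dots, N-1$, where I use Proposition~\ref{csgo2} to equate $H_{\Lambda,h}(\mathcal{P}^{(k)})$ with $H_{\Lambda,h}(L^{(k)})$. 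This identity is the heart of the argument: the discrete sequence $k \mapsto H_{\Lambda,h}(\mathcal{P}^{(k)})$ increases exactly on those steps where $h < h_k^{(N)}$ and decreases exactly where $h > h_k^{(N)}$.

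Next I would record the monotonicity of the thresholds over the full index range. Extending the definition of $h_k^{(N)}$ to all $k = 0, \dots, N-1$ by the same formula, the computation $h_k^{(N)} - h_{k-1}^{(N)} = -J(k) - J(N-k) < 0$ shows the sequence is strictly decreasing, recovering in particular the chain displayed in the excerpt; moreover the antisymmetry $h_k^{(N)} = -h_{N-1-k}^{(N)}$ together with an explicit evaluation gives $h_L^{(N)} \le 0$ (with equality exactly when $N$ is odd). Since Condition~\ref{c:condition} forces $h>0$, this yields $h_k^{(N)} < h$ for every $k \ge L$, so the increment $2(h_k^{(N)}-h)$ is strictly negative for all such $k$ no matter where $h$ lies below $h_0^{(N)}$. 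Thus $k \mapsto H_{\Lambda,h}(\mathcal{P}^{(k)})$ is \emph{unimodal}, strictly increasing up to some index and strictly decreasing afterwards, and its peak never lies beyond $L$.

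With these two facts the three cases are immediate, since strict monotonicity of $(h_k^{(N)})$ places $h$ in exactly one regime. If $h < h_{L-1}^{(N)}$, then $h < h_k^{(N)}$ for all $k \le L-1$ and $h > h_k^{(N)}$ for all $k \ge L$, so the increments are positive up to $k=L$ and negative afterwards, and the unique maximizer is $k=L$ (Case 1). If $h_k^{(N)} < h < h_{k-1}^{(N)}$ for some $k \in \{1,\dots,L-1\}$, then $2(h_j^{(N)}-h)>0$ for $j \le k-1$ and $<0$ for $j \ge k$, so the unique maximizer is $k$ (Case 2). If $h = h_k^{(N)}$ for some $k \in \{1,\dots,L-1\}$, the single vanishing increment at step $k$ gives $H_{\Lambda,h}(\mathcal{P}^{(k)}) = H_{\Lambda,h}(\mathcal{P}^{(k+1)})$, while all earlier increments are positive and all later ones negative, so these two are the common strict maximizers (Case 3).

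The only genuinely delicate point is the first step: getting the one-spin-flip energy difference exactly right, in particular correctly collecting the two sums $\sum_{n=1}^{k} J(n)$ and $\sum_{n=1}^{N-k-1} J(n)$ arising respectively from the plus-block to the left of site $k+1$ and the minus-block to its right, so that the increment reduces cleanly to $2(h_k^{(N)}-h)$. Once that identity is in place, the remaining steps are bookkeeping on the signs of a strictly decreasing sequence, and the three cases follow with no further analysis.
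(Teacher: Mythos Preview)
Your proof is correct and follows essentially the same approach as the paper: both arguments compute the increment $H_{\Lambda,h}(\mathcal{P}^{(k+1)}) - H_{\Lambda,h}(\mathcal{P}^{(k)}) = 2(h_k^{(N)} - h)$, observe that the thresholds $h_k^{(N)}$ are strictly decreasing (equivalently, that $\Delta^2 f < 0$), and then locate the maximizer by reading off where the increment changes sign. The only cosmetic difference is that you derive the increment via a local-field spin-flip computation and establish $h_L^{(N)} \le 0$ through the antisymmetry $h_k^{(N)} = -h_{N-1-k}^{(N)}$, whereas the paper recalls the increment formula and the concavity $\Delta^2 f(k) = -2(J(N-k-1)+J(k+1)) < 0$ together with $\Delta f(L) < 0$ directly from the proof of Theorem~\ref{t:minmax}.\ref{minmax1}.
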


The first point of Proposition~\ref{critdrop} describes the less interesting and, in a way,  artificial, situation of very low external magnetic fields: 
in this regime the \emph{bulk} term is negligible so that the energy of the droplet increases until the positive spins are the majority
(i.e. {$k=L$}, see Figure~\ref{fig:test2}). Therefore, {the second point} contains the most interesting situation, where there is  an interplay between the 
bulk and the \emph{surface} term. 
The following Corollary is a consequence of Proposition~\ref{critdrop}  when $N$ is large enough and gives a characterisation of the critical size $k_c$ of the critical droplet.

\begin{cor}\label{corcrit}
{If we assume that $\sum_{n = 1}^{\infty}J(n)$ converges and 
\begin{equation}
0 < h < \sum_{n = 1}^{\infty}J(n),	
\end{equation} 
then, the size of the critical droplet will be given by
\begin{equation}\label{crit}
k_{c} = \min\left\{k \in \mathbb{N}: \sum_{n = k+1}^{\infty}J(n) \leq h\right\}
\end{equation}
whenever $N$ is sufficiently large.}
\end{cor}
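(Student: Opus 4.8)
The plan is to read the critical size directly off Proposition~\ref{critdrop} and then pass to the limit $N\to\infty$. Recall that the critical droplet is the configuration realizing the maximum in the definition \eqref{e:gamma} of $\Gamma$, so its size is the index $k$ at which the map $k\mapsto H_{\Lambda,h}(\mathcal{P}^{(k)})$ attains its maximum, and Proposition~\ref{critdrop} pins down this index through the location of $h$ among the decreasing thresholds $h_k^{(N)}$. Concretely, I would show that for all sufficiently large $N$ the field $h$ falls into the regime of the second item of Proposition~\ref{critdrop} with $k=k_c$, that is
\[
h_{k_c}^{(N)} < h < h_{k_c-1}^{(N)},
\]
which forces the critical size to equal $k_c$.

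First I would compute the limit of the thresholds. Writing $S:=\sum_{n=1}^{\infty}J(n)$ and $\tilde h_k:=\sum_{n=k+1}^{\infty}J(n)$, convergence of the series gives, for each fixed $k$,
\[
h_k^{(N)} = \sum_{n=1}^{N-k-1}J(n) - \sum_{n=1}^{k}J(n) \xrightarrow[N\to\infty]{} S - \sum_{n=1}^{k}J(n) = \tilde h_k .
\]
The crucial observation is that this convergence is \emph{one-sided}: since $J$ is positive,
\[
\tilde h_k - h_k^{(N)} = \sum_{n=N-k}^{\infty}J(n) > 0,
\]
so that $h_k^{(N)} < \tilde h_k$ for every $N$, with the gap decreasing to zero.

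Next I would check that $k_c$ from \eqref{crit} is well defined and eventually admissible. The tails $\tilde h_k$ are strictly decreasing in $k$ (as $J>0$), with $\tilde h_0 = S > h$ and $\tilde h_k\to 0$, so there is a unique smallest index $k_c\ge 1$ with $\tilde h_{k_c}\le h$, and then necessarily $\tilde h_{k_c-1}>h$ by minimality. Since $k_c$ depends only on $h$ and the limiting series, while $L=\lfloor N/2\rfloor\to\infty$, for large $N$ we have $1\le k_c\le L-1$, so the second item of Proposition~\ref{critdrop} is applicable at $k=k_c$. The one-sided bound gives $h_{k_c}^{(N)} < \tilde h_{k_c}\le h$ for every $N$, while $h_{k_c-1}^{(N)}\to \tilde h_{k_c-1}>h$ yields $h_{k_c-1}^{(N)} > h$ for all large $N$. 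Hence $h_{k_c}^{(N)} < h < h_{k_c-1}^{(N)}$ for large $N$, and Proposition~\ref{critdrop} identifies $\mathcal{P}^{(k_c)}$ as the unique maximizer, i.e.\ the critical size is $k_c$.

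The main obstacle, and the reason the statement only asserts equality \emph{for $N$ sufficiently large}, is the boundary case $h=\tilde h_{k_c}$, where one might fear landing in the degenerate third item of Proposition~\ref{critdrop} with two consecutive droplet sizes of equal maximal energy. This is exactly where the one-sidedness is decisive: because $h_{k_c}^{(N)}$ approaches $\tilde h_{k_c}=h$ strictly from below, we still have $h_{k_c}^{(N)} < h$ for every finite $N$, so the non-degenerate second regime persists and the critical size remains $k_c$. I would therefore rely on the strict estimate rather than mere convergence, and additionally note that $h_{L-1}^{(N)}\to 0 < h$, which rules out the first item of Proposition~\ref{critdrop} for large $N$.
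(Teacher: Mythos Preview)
Your proposal is correct and follows essentially the same approach as the paper: both arguments verify that $h_{k_c}^{(N)} < h < h_{k_c-1}^{(N)}$ for large $N$ and then invoke Proposition~\ref{critdrop}. Your framing via the limiting thresholds $\tilde h_k$ and the one-sided convergence $h_k^{(N)}<\tilde h_k$ is just a tidy repackaging of the paper's explicit tail-sum conditions $\sum_{n=N-k_c+1}^{\infty}J(n)<\sum_{n=k_c}^{\infty}J(n)-h$ and $\lfloor N/2\rfloor>k_c$.
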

 
As a consequence {of Corollary \ref{corcrit}}, the set of \emph{critical configurations} $\mathcal{P}_c$  is given by {
\begin{equation}
\label{critical_conf}
\mathcal{P}_c:= \{L^{(k_c)}, R^{(k_c)}\}
\end{equation}
for $N$ large enough.} {The following result shows the reason why configurations in $\mathcal{P}_c$ are referred to as
\emph{critical} configurations: they indeed trigger the transition towards the stable phase.}
\begin{lemma}
\label{nucleation_1}
Under the conditions stated above,  we have
\begin{enumerate}
\item any path {$\gamma\in \Omega(\mathbf{-1},\mathbf{+1})$ such that $\Phi_\gamma- H_{\Lambda,h}(-\mathbf{1})=\Gamma$} visits $\mathcal{P}_c$, and
\item {the limit}
{
$$
\lim_{\beta\to\infty} \mathbb{P}(\tau_{\mathcal{P}_c}^{-\mathbf{1}}<\tau_{+\mathbf{1}}^{-\mathbf{1}})=1
$$
holds.}
\end{enumerate}
\end{lemma}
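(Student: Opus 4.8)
The plan is to treat the two items separately, and to lean on two facts already available: from Corollary~\ref{corcrit} (together with Proposition~\ref{critdrop} and the definition \eqref{e:gamma} of $\Gamma$) the index $k_{c}$ is the one realising $\max_{k}H_{\Lambda,h}(\mathcal{P}^{(k)})=\Gamma+H_{\Lambda,h}(\mathbf{-1})$, while from Proposition~\ref{csgo2} the pair $L^{(k)},R^{(k)}$ are the \emph{only} minimisers of $H_{\Lambda,h}$ on $\mathcal{M}_k$. The single geometric observation driving everything is that along any path the number of positive spins changes by exactly one at each step, since consecutive configurations differ in a single coordinate.

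For the first item I would take an optimal path $\gamma=(\sigma^{(1)},\dots,\sigma^{(n)})$ with $\sigma^{(1)}=\mathbf{-1}$, $\sigma^{(n)}=\mathbf{+1}$ and $\Phi_\gamma=\Gamma+H_{\Lambda,h}(\mathbf{-1})$. Because the number of positive spins runs from $0$ to $N$ in unit steps, there is an index $i^{*}$ with $\sigma^{(i^{*})}\in\mathcal{M}_{k_c}$. Proposition~\ref{csgo2} gives $H_{\Lambda,h}(\sigma^{(i^{*})})\ge H_{\Lambda,h}(\mathcal{P}^{(k_c)})$, while $H_{\Lambda,h}(\mathcal{P}^{(k_c)})=\Gamma+H_{\Lambda,h}(\mathbf{-1})=\Phi_\gamma\ge H_{\Lambda,h}(\sigma^{(i^{*})})$. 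The two inequalities collapse to equality, so $\sigma^{(i^{*})}$ minimises the energy on $\mathcal{M}_{k_c}$ and hence, by the uniqueness in Proposition~\ref{csgo2}, lies in $\{L^{(k_c)},R^{(k_c)}\}=\mathcal{P}_c$; thus $\gamma$ visits $\mathcal{P}_c$.

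For the second item I would first record the energetic gap
\[
\delta:=\min_{\sigma\in\mathcal{M}_{k_c}\setminus\mathcal{P}_c}H_{\Lambda,h}(\sigma)-H_{\Lambda,h}(\mathcal{P}_c),
\]
which is strictly positive by the uniqueness of the minimisers on the finite set $\mathcal{M}_{k_c}$. Repeating the magnetisation argument, every path from $\mathbf{-1}$ to $\mathbf{+1}$ that avoids $\mathcal{P}_c$ must cross $\mathcal{M}_{k_c}$ at a configuration outside $\mathcal{P}_c$, and therefore has height at least $\Gamma+H_{\Lambda,h}(\mathbf{-1})+\delta$. Conversely the \emph{staircase} path $L^{(0)},L^{(1)},\dots,L^{(k_c)}$ passes only through the configurations $\mathcal{P}^{(j)}$ with $j\le k_c$, each of energy at most $H_{\Lambda,h}(\mathcal{P}^{(k_c)})$, so that $\Phi(\mathbf{-1},\mathcal{P}_c)\le\Gamma+H_{\Lambda,h}(\mathbf{-1})$.

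Given these two bounds the probabilistic claim follows from the standard low-temperature estimates of \cite{MNOS,OV}. With $T_\beta:=e^{\beta(\Gamma+\delta/2)}$ I would split
\[
\mathbb{P}\bigl(\tau_{\mathbf{+1}}^{\mathbf{-1}}<\tau_{\mathcal{P}_c}^{\mathbf{-1}}\bigr)\le\mathbb{P}\bigl(\tau_{\mathbf{+1}}^{\mathbf{-1}}<\tau_{\mathcal{P}_c}^{\mathbf{-1}},\,\tau_{\mathbf{+1}}^{\mathbf{-1}}\le T_\beta\bigr)+\mathbb{P}\bigl(\tau_{\mathcal{P}_c}^{\mathbf{-1}}>T_\beta\bigr).
\]
The second term vanishes because the communication cost to reach $\mathcal{P}_c$ is only $\Gamma<\Gamma+\delta/2$, so $\mathcal{P}_c$ is hit before $T_\beta$ with probability tending to $1$. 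On the event in the first term the trajectory realises a crossing from $\mathbf{-1}$ to $\mathbf{+1}$ avoiding $\mathcal{P}_c$, hence surmounts a barrier of height at least $\Gamma+\delta$ within a time $T_\beta\ll e^{\beta(\Gamma+\delta)}$, an event of vanishing probability. The bookkeeping in Part~1 and the energetic setup of Part~2 are routine; the step that requires genuine care is this last one, namely quoting the relevant deviation/hitting-time lower bound from \cite{MNOS} in the precise form ``a barrier of height $\Gamma+\delta$ is not overcome on the scale $e^{\beta(\Gamma+\delta/2)}$,'' and making sure the uniqueness of Proposition~\ref{csgo2} really delivers a strictly positive gap $\delta$ on the (fixed, finite) configuration space. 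This is the main obstacle I would expect.
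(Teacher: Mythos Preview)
Your proof is correct. It differs from the paper only in level of detail: the paper disposes of the entire lemma in one sentence, citing Theorem~5.4 of \cite{MNOS}, whereas you work out by hand what is needed to invoke (and, for item~2, essentially to reprove) that result.

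For item~1 your argument is exactly the verification one has to make before Theorem~5.4 of \cite{MNOS} can be applied: one must check that $\mathcal{P}_c$ is a \emph{gate} for the transition $\mathbf{-1}\to\mathbf{+1}$, i.e.\ that every optimal path crosses it. Your magnetisation--crossing argument together with the uniqueness part of Proposition~\ref{csgo2} is the natural way to do this, and the paper tacitly relies on the same ingredients (Proposition~\ref{critdrop}/Corollary~\ref{corcrit} to pin down $k_c$, Proposition~\ref{csgo2} for the energy minimisers on $\mathcal{M}_{k_c}$). So here you are simply making explicit what the paper leaves implicit.

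For item~2 you reconstruct the standard large-deviation mechanism (gap $\delta>0$ on $\mathcal{M}_{k_c}\setminus\mathcal{P}_c$, time scale $e^{\beta(\Gamma+\delta/2)}$, splitting of the bad event) that underlies Theorem~5.4 of \cite{MNOS}. This is a genuinely different route only in presentation: the paper treats that theorem as a black box, you open it. Your caveat is well placed---the only delicate point is quoting the right hitting-time bound from \cite{MNOS} or \cite{OV} to control $\mathbb{P}(\tau_{\mathcal{P}_c}^{\mathbf{-1}}>T_\beta)$ and the barrier-crossing probability; once item~1 and the strict gap $\delta>0$ are in hand, those bounds apply verbatim. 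Either approach is fine; yours buys self-containment at the cost of some extra bookkeeping.
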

The proof of the previous Theorem is a straightforward consequence of Theorem~5.4 in \cite{MNOS}.

\subsection{Examples}
Let us give two interesting examples of the general theory so far developed.
\subsubsection{Example 1: exponentially decaying coupling}
We consider 
{$$J(n) = \frac{J}{\lambda^{n-1}},$$}
where $J$ and $\lambda$ are positive real numbers with $\lambda > 1$. 
\begin{prop}
\label{expo}
{Under the same hypotheses as Corollary~\ref{corcrit}, we have that the critical droplet length $k_c$ is equal to 
\begin{equation}
 k_{c} = \left\lceil \log_{\lambda}\left(\frac{J}{h(1-\lambda^{-1})}\right) \right\rceil
\end{equation}
whenever N is sufficiently large.}
\end{prop}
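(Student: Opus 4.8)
The plan is to specialize the general formula for the critical droplet size from Corollary~\ref{corcrit} to the exponentially decaying interaction $J(n) = J\lambda^{-(n-1)}$ and then solve the resulting inequality explicitly. By Corollary~\ref{corcrit}, for $N$ large enough the critical size is
\begin{equation*}
k_c = \min\left\{k\in\mathbb{N}:\ \sum_{n=k+1}^{\infty}J(n)\le h\right\},
\end{equation*}
so the entire task reduces to evaluating the tail sum $\sum_{n=k+1}^{\infty}J(n)$ in closed form and inverting the inequality $\sum_{n=k+1}^{\infty}J(n)\le h$ for $k$.

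First I would compute the geometric tail. Since $J(n)=J\lambda^{-(n-1)}$ with $\lambda>1$, the series $\sum_{n\ge1}J(n)$ converges (so the hypothesis of Corollary~\ref{corcrit} is met), and a direct summation gives
\begin{equation*}
\sum_{n=k+1}^{\infty}J(n)
= J\sum_{n=k+1}^{\infty}\lambda^{-(n-1)}
= J\,\frac{\lambda^{-k}}{1-\lambda^{-1}}.
\end{equation*}
Substituting this into the defining inequality yields the condition $J\lambda^{-k}/(1-\lambda^{-1})\le h$, i.e. $\lambda^{-k}\le h(1-\lambda^{-1})/J$.

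Next I would solve for $k$. Taking logarithms base $\lambda$ (which is order-preserving since $\lambda>1$) turns $\lambda^{-k}\le h(1-\lambda^{-1})/J$ into $-k\le \log_\lambda\!\big(h(1-\lambda^{-1})/J\big)$, that is
\begin{equation*}
k\ge \log_\lambda\!\left(\frac{J}{h(1-\lambda^{-1})}\right).
\end{equation*}
Because $k_c$ is the \emph{smallest} natural number satisfying this inequality, it equals the ceiling of the right-hand side, giving exactly $k_c=\left\lceil \log_\lambda\!\big(J/(h(1-\lambda^{-1}))\big)\right\rceil$, as claimed.

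This argument is essentially a routine computation, so I do not expect a genuine obstacle; the only point requiring a little care is the boundary between the strict and non-strict inequality. The definition of $k_c$ uses ``$\le h$'', while the ceiling function is the least integer $\ge$ its argument, so one must check that the translation from ``least $k$ with $\lambda^{-k}\le h(1-\lambda^{-1})/J$'' to the ceiling is correct even when $\log_\lambda(J/(h(1-\lambda^{-1})))$ happens to be an integer. In that degenerate case equality holds in the tail inequality and the ceiling returns that integer itself, which is indeed the minimizing $k$, so the formula remains valid. I would also remark that the hypothesis $0<h<\sum_{n\ge1}J(n)=J/(1-\lambda^{-1})$ guarantees $J/(h(1-\lambda^{-1}))>1$, hence its logarithm is positive and $k_c\ge1$, consistent with a nontrivial droplet.
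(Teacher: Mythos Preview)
Your proof is correct and follows essentially the same route as the paper's: both invoke Corollary~\ref{corcrit}, compute the geometric tail $\sum_{n\ge k+1}J(n)=J\lambda^{-k}/(1-\lambda^{-1})$, and invert the resulting inequality via logarithms to identify $k_c$ as the ceiling. Your treatment is in fact slightly more careful than the paper's, since you explicitly handle the borderline case where the logarithm is an integer and verify $k_c\ge 1$ from the hypothesis on $h$.
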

\begin{proof}
{By Corollary~\ref{corcrit}, we have
\[ J \sum_{n = k_{c}+1}^{\infty}\lambda^{-(n-1)} \leq h < J \sum_{n = k_{c}}^{\infty}\lambda^{-(n-1)}\]
that implies
\[\frac{\lambda^{-k_{c}}}{1-\lambda^{-1}} \leq \frac{h}{J} < \frac{\lambda^{-(k_{c}-1)}}{1-\lambda^{-1}}\]
Thus
\begin{equation}
k_{c} - 1 < - \frac{\log\left(\frac{h(1-\lambda^{-1})}{J}\right)}{\log \lambda} \leq k_{c}.
\end{equation}
}
\end{proof}
{As a remark we notice that in case of exponential decay of the interaction, the system behaves essentially  as the nearest neighbours one-dimensional
 Ising model. Note that   
\begin{equation}
\lim_{\lambda \to \infty}  J(n) = 
\begin{cases}
J &\text{if $n =1$, and}\\
0 &\text{otherwise;}	
\end{cases}
\end{equation}
moreover, if $h < J = \lim_{\lambda \to \infty}  \sum_{n =1}^{\infty}J(n)$, then $k_{c} = 1$ whenever $\lambda$ is large enough. So, 
we conclude that typically a single plus spin in the lattice will trigger the nucleation of the stable phase.
 As you can see in Figure~\ref{expo22} the energy exitations 
$H_{\Lambda,h}(\mathcal{P}^{(k)})-H_{\Lambda,h}(\mathbf{-1})$ are strictly descreasing in $k$, as expected.}
\begin{figure}
  \centering
  \includegraphics[height=7cm]{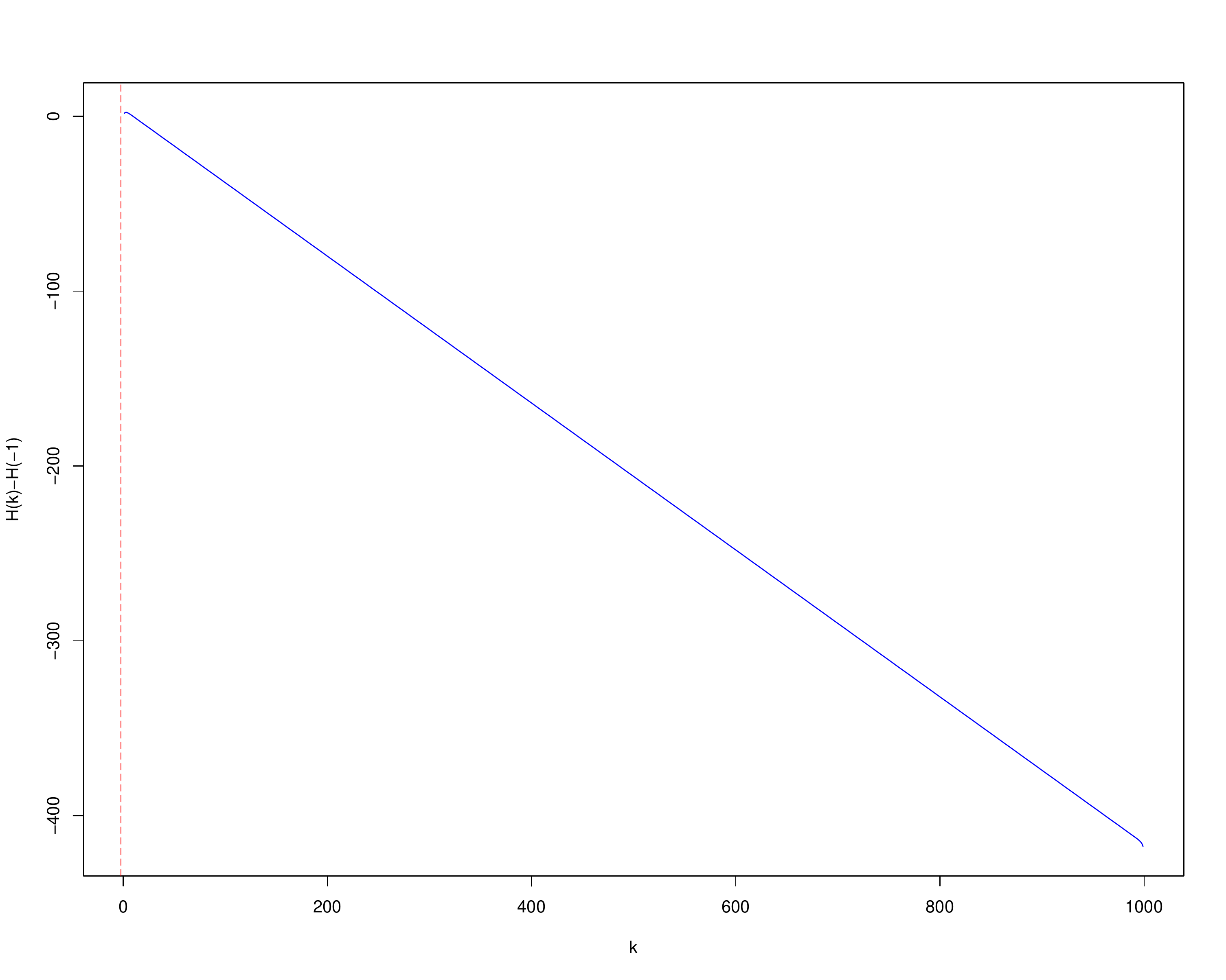}
  \caption{Blue line is the excitation energy 
$H_{\Lambda,h}(\mathcal{P}^{(k)})-H_{\Lambda,h}(\mathbf{-1})$ for $N=1000$,
$\lambda=2, h=0.21,J=1$; 
 red line is the critical droplet.}
\label{expo22}
\end{figure}

\subsubsection{Example 2: polynomially decaying coupling}
 Let the coupling constants be given by  
 $$J(n) = J\cdot n^{-\alpha},$$
 where $J$ and $\alpha$ are positive real numbers with $\alpha > 1$.  As it is shown in Figures~\ref{fig:test1} and  \ref{fig:test2}, for the polynomially decaying coupling model, 
 we have that, for $h$ small enough the critical droplet is essentially the half interval, while for large enough magnetic  external magnetic field, the critical droplet is the configuration with $k_c$ plus spins at the sides, with 
 $k_c\approx  \left(\frac{J}{h(\alpha -1)}\right)^{\frac{1}{\alpha -1}} $. 
 \begin{figure}
\centering
\begin{minipage}{.5\textwidth}
  \centering
  \includegraphics[width=.9\linewidth]{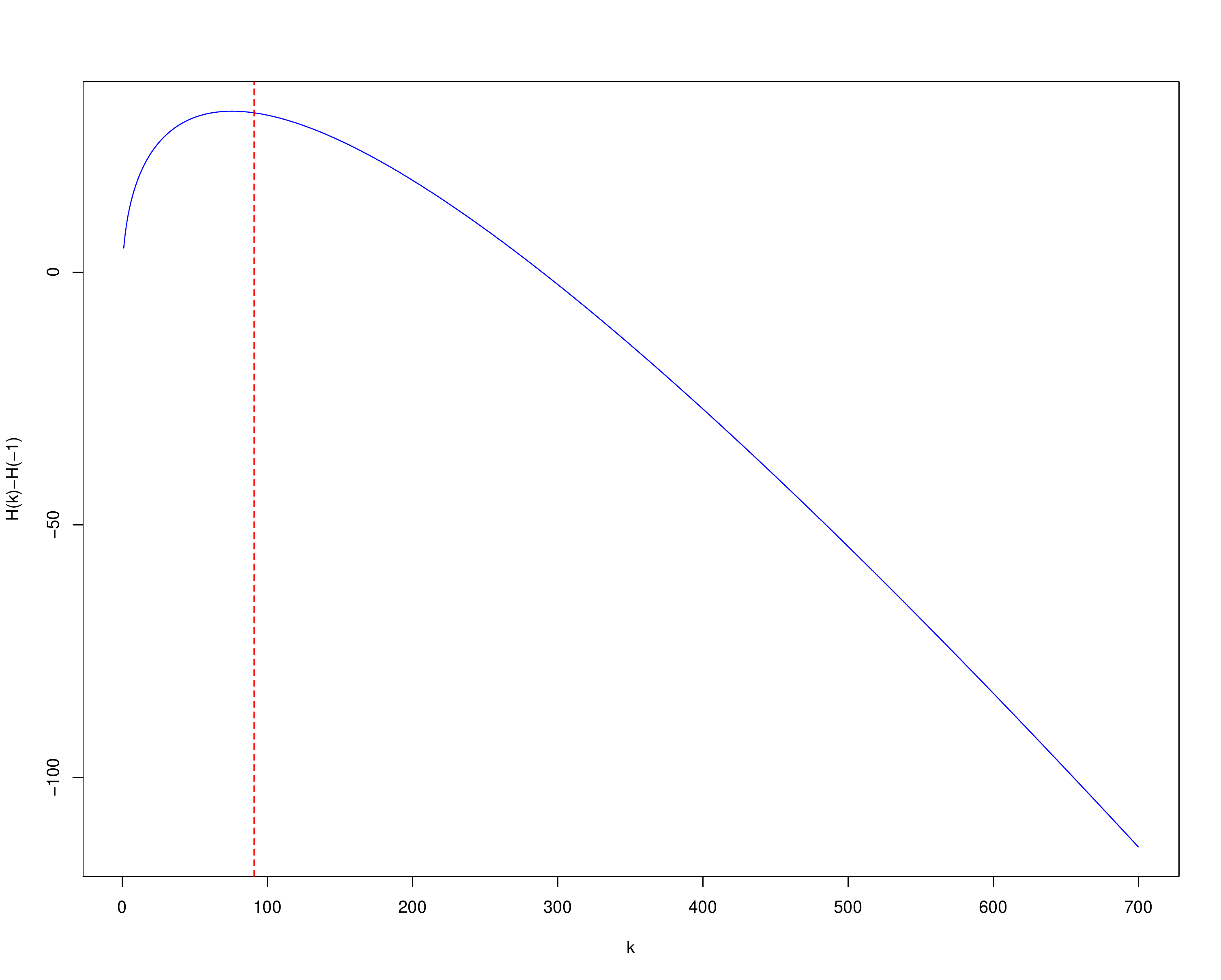}
  \captionof{figure}{Blue line is the excitation energy \\
$H_{\Lambda,h}(\mathcal{P}^{(k)})-H_{\Lambda,h}(\mathbf{-1})$ for $N=10000$,\\
$\alpha=3/2, h=0.21,J=1$; {the
 red line represents the \\ critical length $k_c\approx 91$.}}
  \label{fig:test1}
\end{minipage}%
\begin{minipage}{.5\textwidth}
  \centering
  \includegraphics[width=.9\linewidth]{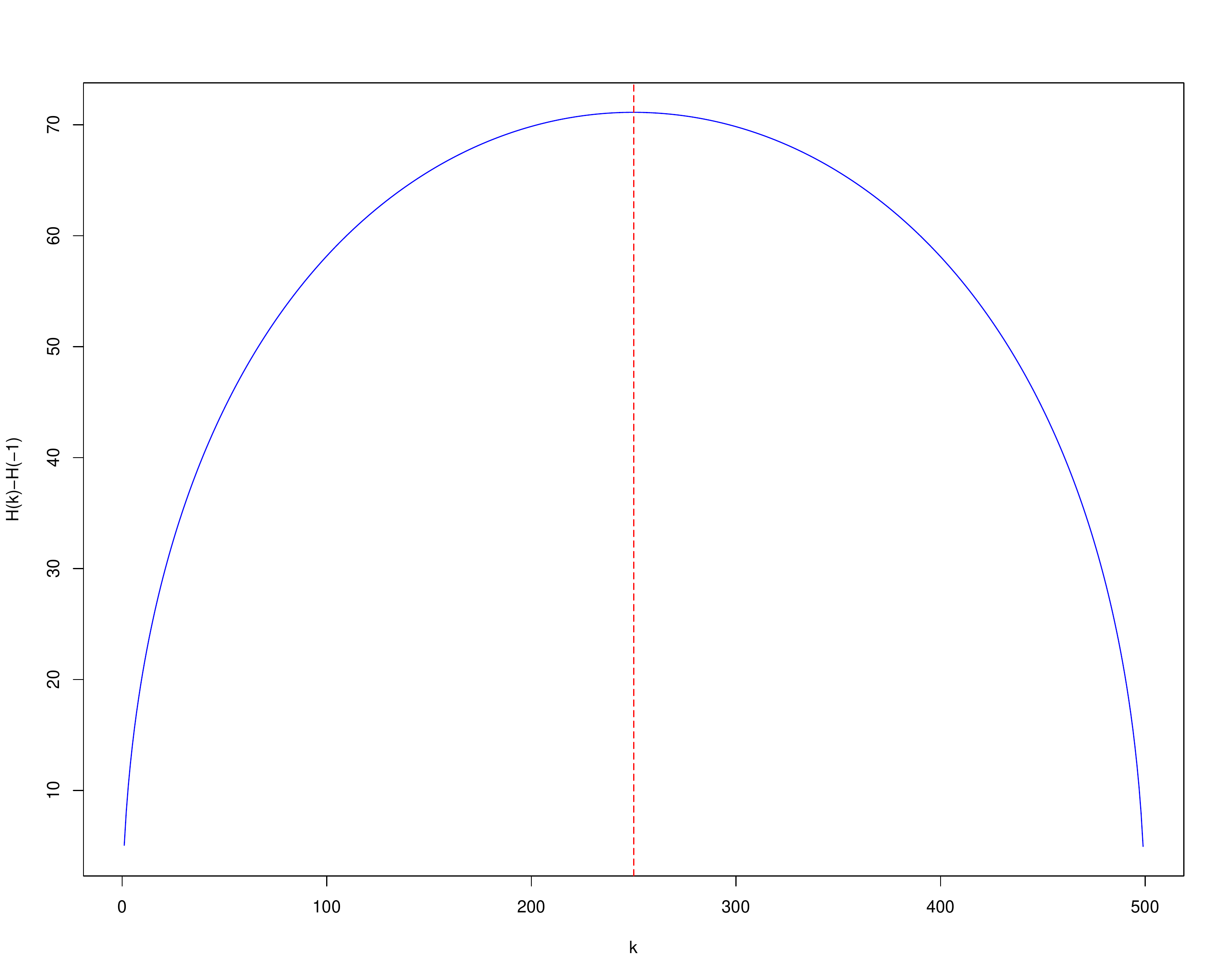}
  \captionof{figure}{
Blue line is the excitation energy \\
$H_{\Lambda,h}(\mathcal{P}^{(k)})-H_{\Lambda,h}(\mathbf{-1})$ for $N=500$,\\
$\alpha=3/2, h=0.0001,J=1$; {the
 red line represents the \\critical length $k_c= 250$.}}
  \label{fig:test2}
\end{minipage}
\end{figure}
 
 We can prove indeed the following proposition.
\begin{prop}\label{dyson}
{Under the same hypotheses as Corollary~\ref{corcrit}, we have that $k_c$ satisfies
\begin{equation}
\left | k_{c} - \left(\frac{J}{h(\alpha -1)}\right)^{\frac{1}{\alpha -1}} \right |<    1
\end{equation}
whenever $N$ is large enough.}
\end{prop}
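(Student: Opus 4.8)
The plan is to apply Corollary~\ref{corcrit} directly, reducing the claim to an estimate on the tail sum of the polynomial interaction. By Corollary~\ref{corcrit}, the critical size $k_c$ is characterised as
\[
k_c = \min\left\{k \in \mathbb{N}: \sum_{n=k+1}^{\infty} J(n) \leq h\right\},
\]
which for $J(n) = J \cdot n^{-\alpha}$ with $\alpha > 1$ means that $k_c$ is the smallest integer $k$ for which $J\sum_{n=k+1}^{\infty} n^{-\alpha} \leq h$. The defining minimality then yields the two-sided inequality
\[
J\sum_{n=k_c+1}^{\infty} n^{-\alpha} \leq h < J\sum_{n=k_c}^{\infty} n^{-\alpha},
\]
exactly as in the exponential case of Proposition~\ref{expo}. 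So the task is to extract sharp bounds on the integer $k_c$ from these tail sums.

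The key step is to approximate the tail sum $\sum_{n=k+1}^{\infty} n^{-\alpha}$ by the integral $\int_{k}^{\infty} x^{-\alpha}\,dx = \frac{k^{1-\alpha}}{\alpha-1}$. Since $x \mapsto x^{-\alpha}$ is positive and decreasing, the standard integral comparison gives
\[
\int_{k+1}^{\infty} x^{-\alpha}\,dx \;\leq\; \sum_{n=k+1}^{\infty} n^{-\alpha} \;\leq\; \int_{k}^{\infty} x^{-\alpha}\,dx,
\]
that is,
\[
\frac{(k+1)^{1-\alpha}}{\alpha-1} \;\leq\; \sum_{n=k+1}^{\infty} n^{-\alpha} \;\leq\; \frac{k^{1-\alpha}}{\alpha-1}.
\]
Combining the lower bound on the tail at $k=k_c$ with the inequality $J\sum_{n=k_c+1}^{\infty} n^{-\alpha} \leq h$ gives $J\,\frac{(k_c+1)^{1-\alpha}}{\alpha-1} \leq h$, and combining the upper bound on the tail at $k=k_c-1$ (whose tail starts at $n=k_c$) with $h < J\sum_{n=k_c}^{\infty} n^{-\alpha}$ gives $h < J\,\frac{(k_c-1)^{1-\alpha}}{\alpha-1}$. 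Solving each for $k_c$ — using that $1-\alpha<0$, so raising to the power $\frac{1}{1-\alpha}$ reverses inequalities — I would obtain
\[
\left(\frac{J}{h(\alpha-1)}\right)^{\frac{1}{\alpha-1}} - 1 \;<\; k_c \;<\; \left(\frac{J}{h(\alpha-1)}\right)^{\frac{1}{\alpha-1}} + 1,
\]
which is precisely the asserted bound $\left| k_c - \left(\tfrac{J}{h(\alpha-1)}\right)^{\frac{1}{\alpha-1}} \right| < 1$.

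The main obstacle, though a mild one, is the bookkeeping of which endpoint of each integral comparison is applied at which value of $k$, so that the inequalities point the right way after inverting the decreasing map $k \mapsto k^{1-\alpha}$; a sign error here would flip the direction and spoil the estimate. One should also note that Corollary~\ref{corcrit} requires the convergence of $\sum_{n=1}^{\infty} J(n)$, which holds since $\alpha > 1$, together with $N$ sufficiently large and $0 < h < J\sum_{n=1}^{\infty} n^{-\alpha}$ so that $k_c$ is well defined and at least $1$; the case $k_c = 1$ must be checked separately since the factor $(k_c-1)^{1-\alpha}$ degenerates, but there the upper bound is automatic because $\left(\frac{J}{h(\alpha-1)}\right)^{\frac{1}{\alpha-1}} > 0$ forces $k_c < \left(\frac{J}{h(\alpha-1)}\right)^{\frac{1}{\alpha-1}}+1$ trivially. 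Everything else is the routine integral estimate described above.
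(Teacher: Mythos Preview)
Your proposal is correct and follows essentially the same route as the paper: starting from the two-sided tail inequality $J\sum_{n=k_c+1}^{\infty}n^{-\alpha}\le h<J\sum_{n=k_c}^{\infty}n^{-\alpha}$ furnished by Corollary~\ref{corcrit}, you sandwich each tail by the integrals $\int_{k_c+1}^{\infty}x^{-\alpha}\,dx$ and $\int_{k_c-1}^{\infty}x^{-\alpha}\,dx$ and invert, exactly as the paper does. Your extra remark on the degenerate case $k_c=1$ (where $(k_c-1)^{1-\alpha}$ blows up) is a nice piece of care that the paper leaves implicit; note also that the integral comparison is strict for the strictly decreasing integrand $x^{-\alpha}$, which is what makes both final inequalities strict as the statement requires.
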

\begin{proof}
{By Corollary~\ref{corcrit}}, it follows that
$$
 J \sum_{n = k_{c}+1}^{\infty}n^{-\alpha} \leq h < J \sum_{n = k_{c}}^{\infty}n^{-\alpha}.
$$
Moreover, note that
$$
 \int_{k_{c}+1}^{\infty}\frac{1}{x^{\alpha}} dx < \sum_{n = k_{c}+1}^{\infty}n^{-\alpha}
$$
and
\[ \sum_{n = k_{c}}^{\infty}n^{-\alpha} <  \int_{k_{c}-1}^{\infty}\frac{1}{x^{\alpha}} dx \]
so that
\[\frac{(k_{c}+1)^{1-\alpha}}{\alpha - 1} < \frac{h}{J} < \frac{(k_{c}-1)^{1-\alpha}}{\alpha - 1}. \]
Hence,
\begin{equation}
 (k_{c} - 1)^{\alpha - 1} < \frac{J}{h(\alpha -1)} < (k_{c} + 1)^{\alpha - 1} .
\end{equation}
\end{proof}


\section{Proof Theorem~\ref{t:minmax}}
We start the proof of the main theorem giving some general results about the control of the energy of a general configuration.
First of all we note that  equation (\ref{ham1}) can be written as
\begin{eqnarray*}
H_{\Lambda, h}(\sigma) &=& -\frac{1}{2}\sum_{i \in \Lambda}\sum_{j \in \Lambda} J(|i-j|)\sigma_{i}\sigma_{j} - h\sum_{i \in \Lambda} \sigma_{i}\\	
&=& \sum_{i \in \Lambda}\sum_{j \in \Lambda} J(|i-j|)\left(\frac{1 - \sigma_{i}\sigma_{j}}{2}\right) - h\sum_{i \in \Lambda} \sigma_{i} - \frac{1}{2}\sum_{i \in \Lambda}\sum_{j \in \Lambda} J(|i-j|)\\
&=& \sum_{i \in \Lambda}\sum_{j \in \Lambda} J(|i-j|) \mathds{1}_{\{\sigma_{i} \neq \sigma_{j}\}} - h\sum_{i \in \Lambda}\sigma_{i} - \frac{1}{2}\sum_{i \in \Lambda}\sum_{j \in \Lambda} J(|i-j|).
\end{eqnarray*}
Moreover, given an integer $k \in \{0, \dots, N\}$, if $\sigma\in\mathcal{M}_k$, then
\begin{equation}\label{main}
H_{\Lambda, h}(\sigma) = \sum_{i \in \Lambda}\sum_{j \in \Lambda} J(|i-j|) \mathds{1}_{\{\sigma_{i} \neq \sigma_{j}\}}   + h (N - 2k) - \frac{1}{2}\sum_{i \in \Lambda}\sum_{j \in \Lambda} J(|i-j|).
\end{equation} 
Therefore, restricting ourselves to configurations that contains only $k$ spins with the value $1$, in order to find such configurations with minimal energy, 
it is sufficient to
minimize the first term of the right-hand side of equation (\ref{main}).

\begin{prop}\label{csgo}
Let $N$ be a positive integer and $k \in \{0, \dots, N\}$, if we restrict to all $\sigma\in\mathcal{M}_k$, then	
\begin{equation}\label{csgo1}
\sum_{i=1}^{N}\sum_{j=1}^{N} J(|i-j|) \mathds{1}_{\{\sigma_{i} \neq \sigma_{j}\}} \geq 2 \sum_{i=1}^{k}\sum_{j=k+1}^{N} J(|i-j|).	
\end{equation}
Under this restriction, the equality in the equation above holds if and only if $\sigma = L^{(k)}$ or $\sigma = {R}^{(k)}$.
\end{prop}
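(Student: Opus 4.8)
The plan is to prove the inequality by a combinatorial rearrangement argument, showing that any configuration in $\mathcal{M}_k$ costs at least as much ``interface energy'' as the fully segregated configurations $L^{(k)}$ and $R^{(k)}$. The quantity to minimize is $\sum_{i,j} J(|i-j|)\mathds{1}_{\{\sigma_i \neq \sigma_j\}}$, which counts, for every unordered pair of opposite-sign sites, the coupling $2J(|i-j|)$. The key structural fact I would exploit is that, because $J$ is decreasing, this cost is minimized when the $k$ plus spins and the $N-k$ minus spins are packed as far apart as possible, i.e.\ when each block is a single contiguous interval sitting at one end of $\Lambda$. The right-hand side $2\sum_{i=1}^{k}\sum_{j=k+1}^{N}J(|i-j|)$ is precisely the cost of $L^{(k)}$ (equivalently $R^{(k)}$ by reflection symmetry).

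First I would set up the bijective/counting framework. For a fixed $\sigma \in \mathcal{M}_k$, let $P = \{i : \sigma_i = +1\}$ and $M = \{i : \sigma_i = -1\}$, with $|P| = k$ and $|M| = N-k$. Then the left-hand side equals $2\sum_{i \in P}\sum_{j \in M} J(|i-j|)$. For $L^{(k)}$ we have $P = \{1,\dots,k\}$ and $M = \{k+1,\dots,N\}$, giving exactly the right-hand side. So the claim reduces to showing that, over all ways of choosing a $k$-subset $P \subseteq \{1,\dots,N\}$, the sum $\sum_{i\in P, j \notin P} J(|i-j|)$ is minimized exactly when $P$ is an initial segment $\{1,\dots,k\}$ or a terminal segment $\{N-k+1,\dots,N\}$.

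The core step is an exchange (or ``shifting'') argument. The plan is to show that if $P$ is not an interval anchored at an endpoint, then one can perform a local move --- for instance, take the leftmost ``gap,'' i.e.\ a minus site with a plus site to its left, and swap the signs of an adjacent plus/minus pair --- in such a way that the interface cost strictly decreases, or at least does not increase, while making $P$ ``more contiguous.'' Concretely, I would compare $\sum_{i\in P, j\in M}J(|i-j|)$ for $\sigma$ and for the configuration obtained by swapping a $+1$ at position $a$ with a $-1$ at an adjacent position $a+1$; using that $J$ is strictly decreasing, the change in energy can be written as a telescoping sum of differences $J(\cdot)-J(\cdot)$ with a definite sign. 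Iterating such moves drives any configuration toward $L^{(k)}$ or $R^{(k)}$ with non-increasing cost, and the strict monotonicity of $J$ forces the inequality to be strict unless $\sigma$ is already one of these two extremal configurations, which yields the equality characterization.

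The main obstacle I anticipate is making the exchange argument genuinely \emph{monotone} and handling the bookkeeping of the long-range contributions carefully: unlike in the nearest-neighbour case, swapping two adjacent spins changes the interaction of those sites with \emph{all} other sites, so the sign of the energy difference is not immediate from a single bond. The crucial estimate is to group the affected terms into pairs that compare $J(|a-\ell|)$ with $J(|a+1-\ell|)$ for each other site $\ell$, and verify that the monotonicity of $J$ makes the net effect of ``compactifying'' the plus-block favourable. I would also need to argue the strictness carefully --- that any $\sigma \neq L^{(k)}, R^{(k)}$ admits a cost-strictly-decreasing move --- to pin down that the minimizers are exactly the two claimed configurations and no others.
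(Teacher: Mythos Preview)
Your reduction to minimizing $\sum_{i\in P,\,j\notin P} J(|i-j|)$ over $k$-subsets $P\subseteq\{1,\dots,N\}$ is correct, but the adjacent-swap exchange argument you propose is \emph{not} monotone. Take $N=4$, $k=2$, and $\sigma=(-,+,+,-)$. The leftmost ``gap'' in your sense is the $(+,-)$ pair at positions $(3,4)$; swapping it yields $(-,+,-,+)$. The interface costs are
\[
(-,+,+,-):\ 4J(1)+4J(2),\qquad (-,+,-,+):\ 6J(1)+2J(3),
\]
so the swap \emph{increases} the cost whenever $J(1)+J(3)>2J(2)$, which holds for both model examples $J(n)=J\lambda^{-(n-1)}$ and $J(n)=Jn^{-\alpha}$. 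The alternative adjacent move, swapping the $(-,+)$ pair at $(1,2)$ to push the plus-block leftward toward $L^{(2)}$, produces $(+,-,+,-)$ with the same increased cost. Hence from $(-,+,+,-)$ \emph{no} adjacent transposition lowers the energy, and your iteration cannot reach $L^{(2)}$ or $R^{(2)}$ along a non-increasing path. The flaw in the pairing heuristic is that the sign of $J(|a-\ell|)-J(|a+1-\ell|)$ flips according to whether $\ell<a$ or $\ell>a+1$; after splitting further according to the sign of $\sigma_\ell$, the four resulting groups of terms do not combine to a definite sign using only monotonicity of $J$.

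The paper avoids this obstruction by arguing by induction on $N$ rather than by local moves. One peels off the first coordinate: if $\sigma_1=+1$, write
\[
\mathcal H_{N+1}(1,\sigma_2,\dots,\sigma_{N+1})=2\sum_{j=1}^{N}J(j)\,\mathds 1_{\{\sigma_{j+1}=-1\}}+\mathcal H_{N}(\sigma_2,\dots,\sigma_{N+1}),
\]
bound the second term by the induction hypothesis, and bound the first using the elementary lemma that for any finite $S\subseteq\mathbb N$ one has $\sum_{i\in S}J(i)\le\sum_{i=1}^{\#S}J(i)$, with equality only when $S$ is the initial segment (an immediate consequence of the strict monotonicity of $J$). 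The case $\sigma_1=-1$ reduces to the previous one via the global spin flip $\sigma\mapsto-\sigma$. This yields both the inequality and the characterization of equality without any delicate sign bookkeeping. If you want to salvage a rearrangement-style proof, you would need a non-adjacent move (for instance, swapping an extremal $+$ with an extremal $-$) and a separate argument for its monotonicity; the adjacent version does not suffice.
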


\begin{proof}
Let us prove the result by induction. Let $\mathcal{H}_{N}$ be defined by
\begin{equation}
\mathcal{H}_{N}(\sigma_{1}, \dots,\sigma_{N}) = \sum_{i=1}^{N}\sum_{j=1}^{N} J(|i-j|) \mathds{1}_{\{\sigma_{i} \neq \sigma_{j}\}} = 2 \sum_{i :\,\sigma_{i} = 1}\sum_{j :\,\sigma_{j} = -1} J(|i-j|).	
\end{equation}
Note that the result is trivial if $N=1$. Assuming that it holds for $N \geq 1$, let us prove that it also holds for $N+1$. 
In case $\sigma_{1} = 1$, applying our induction hypothesis and Lemma \ref{lero}, we have
\begin{eqnarray}
\mathcal{H}_{N+1}(1,\sigma_{2}, \dots,\sigma_{N+1}) &=& 
2 \sum_{j = 1}^{N} J(j)  \mathds{1}_{\{\sigma_{j+1} = -1\}} +  \mathcal{H}_{N}(\sigma_{2}, \dots,\sigma_{N+1}) \\
\label{awp1}&\geq& 2 \sum_{j = k}^{N} J(j) + 2 \sum_{i=1}^{k-1}\sum_{j=k}^{N} J(|i-j|)\\
&=& 2 \sum_{i=1}^{k}\sum_{j=k+1}^{N+1} J(|i-j|).
\end{eqnarray}
Replacing the inequality sign in equation (\ref{awp1}) by an equality, it follows that
\begin{equation}
0 \leq \mathcal{H}_{N}(\sigma_{2}, \dots,\sigma_{N+1}) - 2 \sum_{i=1}^{k-1}\sum_{j=k}^{N} J(|i-j|)
= 2 \sum_{j = k}^{N} J(j) - 2 \sum_{j = 1}^{N} J(j)  \mathds{1}_{\{\sigma_{j+1} = -1\}} \leq 0,
\end{equation}
hence,
\begin{equation}
\sum_{j = 1}^{k-1} J(j) - \sum_{j = 1}^{N} J(j)  \mathds{1}_{\{\sigma_{j+1} = 1\}} = 0.
\end{equation}
Using Lemma \ref{lero} again, we conclude that $\sigma_{j} = 1$ whenever $1 \leq j \leq k$,
and $\sigma_{j} = -1$ whenever $k+1 \leq j \leq N+1$.
Now, in case $\sigma_{1}  = -1$, we write $\mathcal{H}_{N+1}(-1,\sigma_{2}, \dots,\sigma_{N+1})$ as 
\begin{eqnarray}
\mathcal{H}_{N+1}(-1,\sigma_{2}, \dots,\sigma_{N+1}) = \mathcal{H}_{N+1}(1,-\sigma_{2}, \dots,-\sigma_{N+1})
\end{eqnarray}
and apply our previous result in order to obtain 
\begin{equation}
\mathcal{H}_{N+1}(-1,\sigma_{2}, \dots,\sigma_{N+1}) \geq 2 \sum_{i=1}^{N+1-k}\sum_{j=N+2-k}^{N+1} J(|i-j|) = 2 \sum_{i=1}^{k}\sum_{j=k+1}^{N+1} J(|i-j|),	
\end{equation}
where the equality holds only if $\sigma_{j} = -1$ whenever $1 \leq j \leq N+1-k$,
and $\sigma_{j} = 1$ whenever $N+2-k \leq j \leq N+1$.
\end{proof}

As an immediate consequence of Proposition \ref{csgo} the next results follows.

\begin{teo}
\label{csgo2}
Given an integer $k \in \{0, \dots, N\}$, if we restrict to all $\sigma\in\mathcal{M}_k$, then	
\begin{equation}
H_{\Lambda, h}(\sigma) \geq 2 \sum_{i=1}^{k}\sum_{j=k+1}^{N} J(|i-j|) + h (N - 2k) - \frac{1}{2}\sum_{i=1}^{N}\sum_{j =1}^{N} J(|i-j|).	
\end{equation}
Under this restriction, the equality in the equation above holds if and only if $\sigma = R^{(k)}$ or $\sigma = {L}^{(k)}$
\end{teo}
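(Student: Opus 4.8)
The plan is to read the statement off directly from equation~(\ref{main}) together with Proposition~\ref{csgo}, since the theorem is essentially a bookkeeping consequence of these two facts. First I would recall that, for any $\sigma\in\mathcal{M}_k$, equation~(\ref{main}) expresses the Hamiltonian as a sum of three terms: the interface term $\sum_{i\in\Lambda}\sum_{j\in\Lambda} J(|i-j|)\mathds{1}_{\{\sigma_i\neq\sigma_j\}}$, the field term $h(N-2k)$, and the constant $-\tfrac{1}{2}\sum_{i\in\Lambda}\sum_{j\in\Lambda}J(|i-j|)$.

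The key observation is that only the first of these three terms depends on the particular configuration $\sigma$ within the class $\mathcal{M}_k$. Indeed, the field term equals $-h\sum_{i\in\Lambda}\sigma_i$, and since every $\sigma\in\mathcal{M}_k$ has exactly $k$ plus spins and $N-k$ minus spins, one has $\sum_{i\in\Lambda}\sigma_i = k-(N-k) = 2k-N$, so that this term is the constant $h(N-2k)$ on all of $\mathcal{M}_k$; the third term manifestly does not involve $\sigma$ at all. Hence, over $\mathcal{M}_k$, minimizing $H_{\Lambda,h}$ is equivalent to minimizing the interface term alone.

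I would then invoke Proposition~\ref{csgo}, which furnishes the lower bound $\sum_{i=1}^N\sum_{j=1}^N J(|i-j|)\mathds{1}_{\{\sigma_i\neq\sigma_j\}} \geq 2\sum_{i=1}^k\sum_{j=k+1}^N J(|i-j|)$ for every $\sigma\in\mathcal{M}_k$. Substituting this into the decomposition~(\ref{main}) immediately yields the claimed inequality for $H_{\Lambda,h}(\sigma)$. Moreover, because the only configuration-dependent term is exactly the one bounded in Proposition~\ref{csgo}, the equality case transfers verbatim: $H_{\Lambda,h}(\sigma)$ attains the right-hand side if and only if the interface term attains its minimum, which by Proposition~\ref{csgo} happens precisely when $\sigma=L^{(k)}$ or $\sigma=R^{(k)}$.

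In truth there is no substantial obstacle here, as the entire combinatorial content — the identification of the minimizers and the sharp lower bound on the interface energy — has already been carried out in the inductive proof of Proposition~\ref{csgo}. The only point requiring a moment's care is confirming that the field and constant terms are genuinely invariant across $\mathcal{M}_k$, so that both the bound and its equality condition pass from the interface term to the full Hamiltonian without alteration; once this is checked, the theorem follows by direct substitution.
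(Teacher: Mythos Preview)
Your proposal is correct and follows exactly the approach the paper intends: the paper states that Theorem~\ref{csgo2} is ``an immediate consequence of Proposition~\ref{csgo}'' and gives no further proof, and your argument---substituting the bound and equality case from Proposition~\ref{csgo} into the decomposition~(\ref{main}), after noting that the field and constant terms are invariant on $\mathcal{M}_k$---is precisely that immediate consequence spelled out.
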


\subsection{Proof of  Theorem \ref{t:minmax}.\ref{minmax1}(minimax)}


\begin{proof}[Proof of Theorem \ref{t:minmax}.\ref{minmax1}]
Define $f: \{0,\dots, N\} \rightarrow \mathbb{R}$ as 
\begin{equation}
f(k) = H_{\Lambda, h} (\mathscr{P}{^{(k)}}).	
\end{equation}	
It follows that 
\begin{eqnarray*}
\Delta f (k) &=& f(k+1) - f(k) \\
&=& 2\left( \sum_{i=1}^{k+1} \sum_{j=k+2}^{N} J(|i-j|) - \sum_{i=1}^{k} \sum_{j=k+1}^{N} J(|i-j|) - h \right) \\
&=& 2\left( \sum_{j=k+2}^{N} J(|k+1-j|) + \sum_{i=1}^{k} \sum_{j=k+2}^{N} J(|i-j|) - \sum_{i=1}^{k} \sum_{j=k+1}^{N} J(|i-j|) - h \right) \\	
&=& 2\left( \sum_{j=k+2}^{N} J(|k+1-j|) - \sum_{i=1}^{k} J(|i-(k+1)|) - h \right) \\
&=&  2\left( \sum_{i=1}^{N-k-1} J(i) - \sum_{i=1}^{k} J(i) - h \right) 
\end{eqnarray*}
holds for all $k$ such that $0 \leq k \leq N-1$, and 
\begin{eqnarray*}
\Delta^{2} f (k) &=& \Delta f(k+1) - \Delta f(k) \\
&=& 2\left( \sum_{i=1}^{N-k-2} J(i) - \sum_{i=1}^{N-k-1} J(i) - \sum_{i=1}^{k+1} J(i) + \sum_{i=1}^{k} J(i) \right) \\
&=& -2(J(N-k-1) + J(k+1))	
\end{eqnarray*}
holds whenever  $0 \leq k \leq N-2$. 

Note that 
\begin{equation}\label{omg1}
\Delta f (0) = 2\left( \sum_{i=1}^{N-1} J(i) - h \right) > 0,   
\end{equation}
$1 \leq \left\lfloor \frac{N}{2} \right\rfloor\leq N-1$, and
\begin{equation}\label{omg2}
\Delta f \left(\left\lfloor \frac{N}{2} \right\rfloor\right) < 0. 
\end{equation}
It follows from $\Delta^{2} f < 0 $ and equations (\ref{omg1}) and (\ref{omg2}) that $f$ satisfies 
\begin{equation}\label{gammapos}
f(0) < f(1)
\end{equation}
and
\begin{equation}
f\left(\left\lfloor \frac{N}{2} \right\rfloor\right) > \dots > f(N),	
\end{equation}
therefore, $f(k_{0}) = \max_{0 \leq k \leq N} f(k)$ for some $k_{0} \in \{1,\dots, \left\lfloor \frac{N}{2} \right\rfloor\}$.

Defining the path $\gamma: \mathbf{-1} \rightarrow \mathbf{+1}$ by $\gamma = (L^{(0)}, L^{(1)}, \dots, L^{(N)})$, it is easy to see that
\begin{equation}
\Phi(\mathbf{-1}, \mathbf{+1}) = \max_{\sigma \in \gamma}H_{\Lambda,h}(\sigma) =  \max_{0 \leq k \leq N}H_{\Lambda, h} (\mathscr{P}{^{(k)}}) =\Gamma+H_{\Lambda,h}(\mathbf{-1}). 	
\end{equation}
\end{proof}

\subsection{Proof of Theorem \ref{t:minmax}.\ref{minmax3} and  \ref{t:minmax}.\ref{minmax2}}
Before giving the proof of the second point of the main theorem, we give some results about the control of the energy of a spin-flipped configuration.
Given a configuration $\sigma$ and $k \in \Lambda$,  the \emph{spin-flipped}  configuration $\theta_{k}\sigma$ is defined as:
\begin{equation}
(\theta_{k}\sigma)_{i} = 
\begin{cases}
-\sigma_{k} &\text{if $i = k$, and}\\
\sigma_{i} &\text{otherwise.}	
\end{cases}	
\end{equation}
Note that the energetic cost to flip the spin at position $k$ from the configuration $\sigma$ is given by
\begin{eqnarray*}
H_{\Lambda,h}(\theta_{k}\sigma) - H_{\Lambda,h}(\sigma) &=& \sum_{\{i,j\} \subseteq \Lambda} J(|i-j|)(\sigma_{i}\sigma_{j} - (\theta_{k}\sigma)_{i}(\theta_{k}\sigma)_{j}) 
+ h \sum_{i \in \Lambda} (\sigma_{i} - (\theta_{k}\sigma)_{i}) \\
&=& \left(\sum_{j \in \Lambda} J(|k-j|)2\sigma_{k}\sigma_{j}  
+ 2h \sigma_{k} \right) \\
&=& 2\sigma_{k}\left(\sum_{j \in \Lambda} J(|k-j|)\sigma_{j} + h \right).
\end{eqnarray*}

\begin{prop}
Under Condition~\ref{c:condition}, given a configuration $\sigma$ such that
\begin{equation}
\label{property} 
H_{\Lambda,h}(\theta_{k}\sigma) - H_{\Lambda,h}(\sigma) \geq 0 
\end{equation}
for every $k \in \{1,\dots,N\}$, then 
either $\sigma = \mathbf{-1}$ or $\sigma = \mathbf{+1}$.	
\end{prop}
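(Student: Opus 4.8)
The plan is to rephrase the hypothesis \eqref{property} through the spin-flip cost formula displayed just above the statement, and then to show that a configuration satisfying it at \emph{every} site cannot contain an interface. Writing $F_k := \sum_{j \neq k} J(|k-j|)\sigma_j$ for the local field at site $k$, that identity reads $H_{\Lambda,h}(\theta_k\sigma) - H_{\Lambda,h}(\sigma) = 2\sigma_k(F_k + h)$, so \eqref{property} is equivalent to requiring $F_k \geq -h$ at every site with $\sigma_k = +1$ and $F_k \leq -h$ at every site with $\sigma_k = -1$. In particular, \eqref{property} forces
\[
\max_{k:\,\sigma_k = -1} F_k \;\leq\; -h \;\leq\; \min_{k:\,\sigma_k = +1} F_k .
\]
Hence it suffices to prove that for any $\sigma \notin \{\mathbf{+1},\mathbf{-1}\}$ there exist a minus-site $m$ and a plus-site $p$ with $F_m > F_p$; this contradicts the displayed chain of inequalities (regardless of the value of $h$), while the two extremal sets being empty corresponds exactly to $\sigma=\mathbf{+1}$ and $\sigma=\mathbf{-1}$, which are the allowed configurations.

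Next I would localize the comparison at the leftmost interface. Since $\sigma$ is non-constant, let $a$ be the smallest index with $\sigma_a \neq \sigma_{a+1}$; then $\sigma_1 = \cdots = \sigma_a =: s$ and $\sigma_{a+1} = -s$, so the entire left block $\{1,\dots,a\}$ is homogeneous. I would compute the difference $F_{a+1} - F_a$ of the fields at the two interface sites. Separating the contributions of the indices $j\in\{a,a+1\}$ (which produce the local term $2J(1)s$) from the rest, the left block $j\le a-1$ contributes a telescoping sum equal to $s(J(a)-J(1))$, precisely because all those spins equal $s$. What remains is the genuinely long-range part coming from the right block $j\ge a+2$, whose coefficients $J(j-a-1)-J(j-a)$ are non-negative by monotonicity of $J$.

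The main obstacle is exactly this right-block tail: its terms carry the unknown signs of the far spins $\sigma_j$, so a naive single-interface comparison yields no sign information, and this is why a purely local argument fails for long-range interactions. The resolution is to \emph{bound} the tail monotonically rather than evaluate it: since $0\le J(j-a-1)-J(j-a)$, the tail lies between $\mp\sum_{j\ge a+2}\bigl(J(j-a-1)-J(j-a)\bigr)$, and each of these bounds telescopes to $\mp\bigl(J(1)-J(N-a)\bigr)$. Substituting, when $s=+1$ (so $a+1$ is the minus-site and $a$ the plus-site) the lower bound gives $F_{a+1}-F_a \ge J(a)+J(N-a)>0$, whereas when $s=-1$ (so $a$ is the minus-site and $a+1$ the plus-site) the upper bound gives $F_{a+1}-F_a \le -\bigl(J(a)+J(N-a)\bigr)<0$. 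In both cases the minus-site at the interface has strictly larger local field than the adjacent plus-site, which is the contradiction sought in the first paragraph; therefore $\sigma\in\{\mathbf{+1},\mathbf{-1}\}$.

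I should note that this argument uses only positivity and monotonicity of $J$, and not the upper bound on $h$: Condition~\ref{c:condition} is not needed for the stated implication itself. It is the \emph{converse} that requires it, since the flip cost at an endpoint of $\mathbf{-1}$ equals $2\bigl(\sum_{n=1}^{N-1}J(n)-h\bigr)$, so $h<\sum_{n=1}^{N-1}J(n)$ is exactly what makes $\mathbf{-1}$ a solution of \eqref{property}, i.e. a genuine local minimum of the energy.
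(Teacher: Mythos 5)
Your proof is correct and is essentially the paper's own argument in different clothing: comparing the local fields $F_a$ and $F_{a+1}$ across the leftmost interface is exactly the paper's step of summing the two flip inequalities at sites $k$ and $k+1$, and both use monotonicity of $J$ to telescope the right-block tail to $J(1)-J(N-a)$, yielding the same contradiction $J(a)+J(N-a)\le 0$. Your closing remark that the upper bound on $h$ in Condition~\ref{c:condition} is only needed for the converse (to make $\mathbf{-1}$ a local minimum) is accurate and consistent with the paper's proof, which likewise never invokes it in the forward direction.
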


\begin{proof}
Let $k \in \{1,\dots,N-1\}$, and let $\sigma$ be a configuration such that $\sigma_{i}= +1$ whenever $1 \leq i \leq k$ and $\sigma_{k+1} = -1$. In the following,
we show that every such $\sigma$ cannot satisfy property (\ref{property}). If property (\ref{property}) is satisfied, then
\begin{equation}
\begin{cases}
H_{\Lambda,h}(\theta_{k}\sigma) - H_{\Lambda,h}(\sigma) \geq 0 \\
H_{\Lambda,h}(\theta_{k+1}\sigma) - H_{\Lambda,h}(\sigma) \geq 0	
\end{cases}	
\end{equation}
that is,
\begin{equation}
\begin{cases}
\sum_{i=1}^{k-1}J(|k-i|) - J(1) + \sum_{i=k+2}^{N}J(|k-i|)\sigma_{i} + h \geq 0 \\
-\left(\sum_{i=1}^{k}J(|k+1-i|) + \sum_{i=k+2}^{N}J(|k+1-i|)\sigma_{i} + h \right)\geq 0.	
\end{cases}
\end{equation}	
Summing both equations above, we have
\begin{eqnarray*}
0 &\leq& -J(k) - J(1) + \sum_{i=k+2}^{N}(J(i-k) - J(i-k-1))\sigma_{i} \\
&\leq&	-J(k) - J(1) + \sum_{i=k+2}^{N}(J(i-k-1) - J(i-k))\\
&=& -J(k) - J(1) + \sum_{i=1}^{N-k-1}(J(i) - J(i+1))\\
&=& -J(k)-J(N-k)
\end{eqnarray*}	
that is a contradiction. Analogously, every configuration $\sigma$ such that such that $\sigma_{i}= -1$ whenever $1 \leq i \leq k$ and $\sigma_{k+1} = 1$ for some
$k \in \{1,\dots,N-1\}$, property (\ref{property}) cannot be satisfied. Therefore, we conclude that for every $\sigma$ different from 
$\mathbf{-1}$ and $\mathbf{+1}$, property (\ref{property}) does not hold.

The proof of the converse statement is straightforward.
\end{proof}

As an immediate consequence of the result above, the next result follows.
\begin{cor}\label{path}
Under Condition~\ref{c:condition}, for every configuration $\sigma$ different from
$\mathbf{-1}$ and $\mathbf{+1}$, there is a path $\gamma = (\sigma^{(1)}, \dots, \sigma^{(n)})$, where $\sigma^{(1)} = \sigma$ and $\sigma^{(n)} \in \{\mathbf{-1},\mathbf{+1}\}$,
such that $H_{\Lambda,h}(\sigma^{(i+1)}) < H_{\Lambda,h}(\sigma^{(i)})$.
\end{cor}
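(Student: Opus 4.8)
The plan is to read the statement as the contrapositive of the preceding Proposition and then to iterate. Since $\sigma \neq \mathbf{-1},\mathbf{+1}$, the hypothesis of that Proposition fails, so property~(\ref{property}) cannot hold for all $k$; hence there exists some $k \in \{1,\dots,N\}$ with $H_{\Lambda,h}(\theta_{k}\sigma) - H_{\Lambda,h}(\sigma) < 0$. In words, from any configuration other than the two homogeneous ones there is always a single-spin flip that strictly lowers the energy, and this is exactly the ingredient needed to build a strictly descending path.

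Concretely, I would define $\gamma$ recursively. Set $\sigma^{(1)} := \sigma$. Given $\sigma^{(i)}$, if $\sigma^{(i)} \in \{\mathbf{-1},\mathbf{+1}\}$ we stop; otherwise we apply the contrapositive to choose an index $k_i$ with $H_{\Lambda,h}(\theta_{k_i}\sigma^{(i)}) < H_{\Lambda,h}(\sigma^{(i)})$ and set $\sigma^{(i+1)} := \theta_{k_i}\sigma^{(i)}$. Because $\sigma^{(i+1)}$ differs from $\sigma^{(i)}$ in a single coordinate, we have $c(\sigma^{(i)},\sigma^{(i+1)}) = 1/|\Lambda| > 0$, so every consecutive pair is admissible and $\gamma$ is a genuine path in the sense of the definition in Section~2; by construction $H_{\Lambda,h}$ is strictly decreasing along it.

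It remains to argue that the recursion terminates, and that it can only terminate at $\mathbf{-1}$ or $\mathbf{+1}$. Termination is immediate from finiteness: since the energy strictly decreases at each step, the configurations $\sigma^{(1)},\sigma^{(2)},\dots$ are pairwise distinct, and $\Omega_\Lambda$ has only $2^{N}$ elements, so the procedure must halt after at most $2^{N}$ steps. The only way it can halt is by reaching a configuration in $\{\mathbf{-1},\mathbf{+1}\}$, since for every other configuration the contrapositive furnishes a further strictly downhill flip; hence the terminal configuration $\sigma^{(n)}$ lies in $\{\mathbf{-1},\mathbf{+1}\}$, as required. There is no genuine obstacle here: the only points requiring care are invoking the Proposition in contrapositive form and observing that the strict energy decrease rules out cycling, which combined with the finiteness of $\Omega_\Lambda$ forces termination at a homogeneous configuration.
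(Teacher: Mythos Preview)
Your proof is correct and is precisely the argument the paper has in mind: the paper simply declares the corollary ``an immediate consequence'' of the preceding Proposition without spelling out the iteration, and your contrapositive-plus-recursion argument is exactly how that immediate consequence is obtained. Nothing is missing.
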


We have now all the element for proving item \ref{minmax3} and \ref{minmax2} of Theorem ~\ref{t:minmax}.

\begin{proof}[Proof of Theorem \ref{t:minmax}.\ref{minmax3}]
First, note that it follows from inequality (\ref{gammapos}) that $\Gamma > 0$. Now, let us show that $V_{\mathbf{-1}}$ satisfies
\begin{equation}
V_{\mathbf{-1}} = \Phi(\mathbf{-1},\mathbf{+1}) - H_{\Lambda,h}(\mathbf{-1}).	
\end{equation} 
Since $\mathbf{+1}\in \mathscr{I}_{\mathbf{-1}}$, we have 
\begin{equation}
\label{v1}
V_{\mathbf{-1}} \leq \Phi(\mathbf{-1},\mathbf{+1}) - H_{\Lambda,h}(\mathbf{-1}).
\end{equation}
So, we conclude the proof if we show that 
\begin{equation}
\label{v2}
\Phi(\mathbf{-1},\mathbf{+1}) \leq \Phi(\mathbf{-1},\eta)
\end{equation}
 holds for every $\eta\in \mathscr{I}_{\mathbf{-1}}$. Let $\gamma_{1}: \mathbf{-1} \rightarrow \eta$ be a path from $\mathbf{-1}$ to $\eta$ given by
$\gamma_{1} = (\sigma^{(1)}, \dots, \sigma^{(n)})$,
then, according to Corollary \ref{path}, there is a path
$\gamma_{2} : \eta \rightarrow \mathbf{+1}$, say $\gamma_{2} = (\eta^{(1)},\dots, \eta^{(m)})$, along which the energy decreases. Hence, the path 
$\gamma: \mathbf{-1} \rightarrow \mathbf{+1}$ given by 
\begin{equation}
\gamma = (\sigma^{(1)},\dots, \sigma^{(n-1)}, \eta^{(1)},\dots, \eta^{(m)})	
\end{equation}	
satisfies
\begin{equation}
\Phi_{\gamma} (-\mathbf{1},+\mathbf{1}) = \Phi_{\gamma_1} (-\mathbf{1},\eta) 
\vee \Phi_{\gamma_2} (\eta,+\mathbf{1}))
=\Phi_{\gamma_1} (-\mathbf{1},\eta).    	
\end{equation}
Hence, the inequality 
\begin{equation}
\Phi(\mathbf{-1},\mathbf{+1}) \leq \Phi_{\gamma_{1}}(\mathbf{-1},\eta)
\end{equation}
holds for every path $\gamma_{1}: \mathbf{-1} \rightarrow \eta$, and equation (\ref{v2}) follows.
\end{proof}

\begin{proof}[Proof of Theorem \ref{t:minmax}.\ref{minmax2}]
Given $\sigma \notin \{\mathbf{-1}, \mathbf{+1}\}$, let us show now that
\begin{equation}
\Phi(\sigma,\eta) - H_{\Lambda,h}(\sigma) < V_{\mathbf{-1}}	
\end{equation}
holds for any $\eta\in \mathscr{I}_{\sigma}$. Let us consider the following cases.
\begin{enumerate}
\item Case $\eta = \mathbf{+1}$. According to Corollary (\ref{path}), there is a path $\gamma = (\sigma^{(1)}, \dots, \sigma^{(n)})$ from 
$\sigma^{(1)}= \sigma$ to $\sigma^{(n)} \in \{\mathbf{-1},\mathbf{+1}\}$ along which the energy decreases. 
\begin{itemize}
\item[(a)] If $\sigma^{(n)} = \mathbf{-1}$, then the path $\gamma_{0}: \sigma \rightarrow \mathbb{\eta}$	
given by $\gamma_{0} = (\sigma^{(1)}, \dots, \sigma^{(n-1)}, L^{(0)}, \dots, L^{(N)})$ satisfies
\begin{eqnarray*}
\Phi(\sigma,\eta) - H_{\Lambda,h}(\sigma) &\leq& \max_{\zeta \in \gamma_{0}}H_{\Lambda,h}(\zeta) - H_{\Lambda,h}(\sigma) \\
&\leq& \left(\max_{\zeta \in \gamma}H_{\Lambda,h}(\zeta)\right) 
\vee \left(\max_{0 \leq k \leq N}H_{\Lambda,h}(L^{(k)})\right) - H_{\Lambda,h}(\sigma)	\\
&=&  0
\vee \left(\max_{0 \leq k \leq N}H_{\Lambda,h}(L^{(k)}) - H_{\Lambda,h}(\sigma)\right) \\
&<& \max_{0 \leq k \leq N}H_{\Lambda,h}(L^{(k)}) - H_{\Lambda,h}(\mathbf{-1}) \\
&=& V_{\mathbf{-1}}.
\end{eqnarray*}

\item[(b)] Otherwise, if $\sigma^{(n)} = \mathbf{+1}$, then
\begin{eqnarray*}
\Phi(\sigma,\eta) - H_{\Lambda,h}(\sigma) &\leq& \max_{\zeta \in \gamma}H_{\Lambda,h}(\zeta) - H_{\Lambda,h}(\sigma) \\
&=& 0 \\
&<& V_{\mathbf{-1}}.
\end{eqnarray*}
\end{itemize}

\item Case $\eta = \mathbf{-1}$. According to Corollary (\ref{path}), there is a path $\gamma = (\sigma^{(1)}, \dots, \sigma^{(n)})$ from 
$\sigma^{(1)}= \sigma$ to $\sigma^{(n)} \in \{\mathbf{-1},\mathbf{+1}\}$ along which the energy decreases. 
\begin{itemize}
\item[(a)] If $\sigma^{(n)} = \mathbf{+1}$, then the path $\gamma_{0}: \sigma \rightarrow \mathbb{\eta}$	
given by $\gamma_{0} = (\sigma^{(1)}, \dots, \sigma^{(n-1)}, L^{(N)}, \dots, L^{(0)})$ satisfies
\begin{eqnarray*}
\Phi(\sigma,\eta) - H_{\Lambda,h}(\sigma) &\leq& \max_{\zeta \in \gamma_{0}}H_{\Lambda,h}(\zeta) - H_{\Lambda,h}(\sigma) \\
&\leq& \left(\max_{\zeta \in \gamma}H_{\Lambda,h}(\zeta)\right) 
\vee \left(\max_{0 \leq k \leq N}H_{\Lambda,h}(L^{(k)})\right) - H_{\Lambda,h}(\sigma)	\\
&=&  0
\vee \left(\max_{0 \leq k \leq N}H_{\Lambda,h}(L^{(k)}) - H_{\Lambda,h}(\sigma)\right) \\
&<& \max_{0 \leq k \leq N}H_{\Lambda,h}(L^{(k)}) - H_{\Lambda,h}(\mathbf{-1}) \\
&=& V_{\mathbf{-1}}.
\end{eqnarray*}

\item[(b)] Otherwise, if $\sigma^{(n)} = \mathbf{-1}$, then
\begin{eqnarray*}
\Phi(\sigma,\eta) - H_{\Lambda,h}(\sigma) &\leq& \max_{\zeta \in \gamma}H_{\Lambda,h}(\zeta) - H_{\Lambda,h}(\sigma) \\
&=& 0 \\
&<& V_{\mathbf{-1}}.	
\end{eqnarray*}
\end{itemize}

\item Case $\eta \notin \{\mathbf{-1}, \mathbf{+1}\}$. Let $\gamma_{1} = (\sigma^{(1)}, \dots, \sigma^{(n)})$ and $\gamma_{2} = (\eta^{(1)}, \dots, \eta^{(m)})$ be
paths from $\sigma^{(1)} = \sigma$ to $\sigma^{(n)} \in \{\mathbf{-1},\mathbf{+1}\}$ and from $\eta^{(1)} = \eta$ to $\eta^{(m)} \in \{\mathbf{-1},\mathbf{+1}\}$, respectively, along which
the energy decreases. 
\begin{itemize}
\item[(a)] If $\sigma^{(n)} = \eta^{(m)}$, define the path $\gamma: \sigma \rightarrow \eta$ given by $\gamma_{0} = (\sigma^{(1)}, \dots, \sigma^{(n-1)}, \eta^{(m)},\dots, \eta^{(1)})$
in order to obtain
\begin{eqnarray*}
\Phi(\sigma,\eta) - H_{\Lambda,h}(\sigma) &\leq& \max_{\zeta \in \gamma_{0}}H_{\Lambda,h}(\zeta) - H_{\Lambda,h}(\sigma) \\
&=& \left(\max_{\zeta \in \gamma_{1}}H_{\Lambda,h}(\zeta)\right) 
\vee \left(\max_{\zeta \in \gamma_{2}}H_{\Lambda,h}(\zeta)\right) - H_{\Lambda,h}(\sigma)	\\
&=&  H_{\Lambda,h}(\sigma)
\vee H_{\Lambda,h}(\eta) - H_{\Lambda,h}(\sigma) \\
&=& 0 \\
&<& V_{\mathbf{-1}}.
\end{eqnarray*}
\item[(b)] If $\sigma^{(n)} = \mathbf{-1}$ and $\eta^{(m)} = \mathbf{+1}$, let us define the path $\gamma_{0}: \sigma \to \eta$ given by 
\begin{equation}
\gamma_{0} = (\sigma^{(1)}, \dots, \sigma^{(n-1)}, L^{(0)}, \dots, L^{(N)}, \eta^{(m-1)}, \dots, \eta^{(1)})	
\end{equation}
satisfies
\begin{eqnarray*}
\Phi(\sigma,\eta) - H_{\Lambda,h}(\sigma) &\leq& \max_{\zeta \in \gamma_{0}}H_{\Lambda,h}(\zeta) - H_{\Lambda,h}(\sigma) \\
&=& \left(\max_{\zeta \in \gamma_{1}}H_{\Lambda,h}(\zeta)\right) 
\vee \left(\max_{0 \leq k \leq N}H_{\Lambda,h}(L^{(k)})\right) \vee  \left(\max_{\zeta \in \gamma_{2}}H_{\Lambda,h}(\zeta)\right)  - H_{\Lambda,h}(\sigma)	\\
&=&  H_{\Lambda,h}(\sigma)
\vee \left(\max_{0 \leq k \leq N}H_{\Lambda,h}(L^{(k)})\right) \vee  H_{\Lambda,h}(\eta)  - H_{\Lambda,h}(\sigma)	\\
&=&  0
\vee \left(\max_{0 \leq k \leq N}H_{\Lambda,h}(L^{(k)}) - H_{\Lambda,h}(\sigma)\right) \\
&<& \max_{0 \leq k \leq N}H_{\Lambda,h}(L^{(k)}) - H_{\Lambda,h}(\mathbf{-1}) \\
&=& V_{\mathbf{-1}}.
\end{eqnarray*}

\item[(c)] If $\sigma^{(n)} = \mathbf{+1}$ and $\eta^{(m)} = \mathbf{-1}$, let us define the path $\gamma_{0}: \sigma \to \eta$ given by 
\begin{equation}
\gamma_{0} = (\sigma^{(1)}, \dots, \sigma^{(n-1)}, L^{(N)}, \dots, L^{(0)}, \eta^{(m-1)}, \dots, \eta^{(1)})	
\end{equation}
satisfies
\begin{eqnarray*}
\Phi(\sigma,\eta) - H_{\Lambda,h}(\sigma) &\leq& \max_{\zeta \in \gamma_{0}}H_{\Lambda,h}(\zeta) - H_{\Lambda,h}(\sigma) \\
&=& \left(\max_{\zeta \in \gamma_{1}}H_{\Lambda,h}(\zeta)\right) 
\vee \left(\max_{0 \leq k \leq N}H_{\Lambda,h}(L^{(k)})\right) \vee  \left(\max_{\zeta \in \gamma_{2}}H_{\Lambda,h}(\zeta)\right)  - H_{\Lambda,h}(\sigma)	\\
&=&  H_{\Lambda,h}(\sigma)
\vee \left(\max_{0 \leq k \leq N}H_{\Lambda,h}(L^{(k)})\right) \vee  H_{\Lambda,h}(\eta)  - H_{\Lambda,h}(\sigma)	\\
&=&  0
\vee \left(\max_{0 \leq k \leq N}H_{\Lambda,h}(L^{(k)}) - H_{\Lambda,h}(\sigma)\right) \\
&<& \max_{0 \leq k \leq N}H_{\Lambda,h}(L^{(k)}) - H_{\Lambda,h}(\mathbf{-1}) \\
&=& V_{\mathbf{-1}}.
\end{eqnarray*}
\end{itemize}
\end{enumerate}

We conclude that for every $\sigma \notin \{\mathbf{-1},\mathbf{+1}\}$, we have $V_{\sigma} < V_{\mathbf{-1}}$.
\end{proof}

\section{Proofs of the critical droplets results}
\begin{proof}[Proof of Proposition~\ref{critdrop}]
 As in the proof of  Theorem~\ref{t:minmax}, let us define $f: \{0,\dots, N\} \rightarrow \mathbb{R}$ as 
\begin{equation}
f(i) = H_{\Lambda, h} (L^{(i)}),	
\end{equation}
and recall that 
\begin{equation}
  \Delta f (i) = 2\left( \sum_{n=1}^{N-i-1} J(n) - \sum_{n=1}^{i} J(n) - h \right). 
\end{equation}
In the first case, we have $\Delta f(L-1) = 2(h_{L-1}^{(N)} - h) > 0$,
thus, since $f$ decreases for all $i$ greater than $L$,  and since $\Delta^2 f<0$, we conclude that $f$ attains a unique strict global maximum at $L$. In the second case,
we have $\Delta f(k-1) = 2(h_{k-1}^{(N)} - h) > 0$ and $\Delta f(k) = 2(h_{k}^{(N)} - h) < 0$,
so, $f$ attains a unique strict global maximum at $k$. Finally, in the third case, we have $\Delta f(k) = 0$,
that is, $f(k) = f(k+1)$. Using the fact that $\Delta f(k+1) < 0 < \Delta f(k-1)$, we conclude that the global maximum of $f$ can we only be reached 
at $k$ and $k+1$.
\end{proof}

\begin{proof}[Proof of Corollary~\ref{corcrit}]
Since $\sum_{n = 1}^{\infty}J(n)$ converges, it follows that the set in equation (\ref{crit}) is nonempty, thus $k_{c}$ is well defined. Then, we have
\begin{equation}
\sum_{n = k_{c}+1}^{\infty}J(n) \leq h < \sum_{n = k_{c}}^{\infty}J(n).
\end{equation}
For all $N$ sufficiently large such that $\left\lfloor \frac{N}{2} \right\rfloor > k_{c}$ and
\begin{equation}
 \sum_{n = N - k_{c}+1}^{\infty}J(n) < \sum_{n = k_{c}}^{\infty}J(n) - h,
\end{equation}
we have
\begin{equation}
 h < \sum_{n = k_{c}}^{\infty}J(n) - \sum_{n = N - k_{c}+1}^{\infty}J(n) = h_{k_{c}-1}^{(N)}
\end{equation}
and
\begin{equation}
 h_{k_{c}}^{(N)} = \sum_{n = k_{c}+1}^{\infty}J(n) - \sum_{n = N - k_{c}}^{\infty}J(n) < h.
\end{equation}
Therefore, by means of Proposition \ref{critdrop}, we conclude that for $N$ large enough, $k_{c}$ satisfies
\begin{equation}
 H_{\Lambda,h}(\mathscr{P}^{(k_{c})}) > \max_{\substack{0 \leq i \leq N \\ i \neq k_{c}}} H_{\Lambda,h}(\mathscr{P}^{(i)}).
\end{equation} 
\end{proof}


\appendix

\section{Appendix}

\begin{lemma}\label{lero}
Let $\Lambda$ be a finite subset of $\mathbb{N}$, then
\begin{equation}
\sum_{i \in \Lambda} J(i) \leq \sum_{i = 1}^{\# \Lambda} J(i),	
\end{equation} 	
moreover, the equality holds if and only if $\Lambda = \{1, \dots, \# \Lambda\}$.
\end{lemma}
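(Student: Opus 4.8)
The plan is to reduce the statement to the elementary fact that the $\ell$-th smallest element of a set of distinct positive integers is at least $\ell$, and then to exploit that $J$ is (strictly) decreasing. First I would enumerate the elements of $\Lambda$ in increasing order, writing $\Lambda = \{i_1 < i_2 < \cdots < i_m\}$, where $m := \#\Lambda$. Since $i_1, \dots, i_m$ are $m$ distinct positive integers listed increasingly, a one-line induction gives $i_\ell \geq \ell$ for every $\ell \in \{1, \dots, m\}$: indeed $i_1 \geq 1$, and if $i_{\ell-1} \geq \ell-1$ then $i_\ell > i_{\ell-1} \geq \ell-1$ forces $i_\ell \geq \ell$.

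With this key inequality in hand, the bound follows immediately from monotonicity of $J$. Because $J$ is decreasing, $i_\ell \geq \ell$ yields $J(i_\ell) \leq J(\ell)$ for each $\ell$, and summing these inequalities over $\ell = 1, \dots, m$ gives
\[
\sum_{i \in \Lambda} J(i) = \sum_{\ell=1}^{m} J(i_\ell) \leq \sum_{\ell=1}^{m} J(\ell) = \sum_{i=1}^{\#\Lambda} J(i),
\]
which is precisely the asserted inequality.

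For the equality case I would invoke the strict monotonicity of $J$, which makes it injective. Equality in the displayed chain forces $J(i_\ell) = J(\ell)$ for every $\ell$, and injectivity of $J$ then gives $i_\ell = \ell$ for all $\ell$, i.e.\ $\Lambda = \{1, \dots, m\}$; the converse implication is immediate, since that choice of $\Lambda$ produces equality term by term. I do not expect a genuine obstacle in this argument: the only delicate point is the equality direction, which rests on $J$ being \emph{strictly} (not merely weakly) decreasing — a property satisfied by both the exponential and polynomial interactions considered in the examples, and which is what the stated monotonicity assumption on $J$ should be understood to provide.
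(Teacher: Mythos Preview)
Your proof is correct and follows essentially the same idea as the paper's: both rest on the elementary fact that the $\ell$-th smallest of $m$ distinct positive integers is at least $\ell$, and then invoke the (strict) monotonicity of $J$. The paper packages this as an induction on $\#\Lambda$ that peels off the maximal element, while you sort once and compare termwise --- a purely cosmetic difference in organisation.
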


\begin{proof}
Let $k$ be the number of elements of $\Lambda$. Note that for $k=0$ the result holds, so, suppose that it holds whenever $\Lambda$ has $k$ elements. 
Given a subset $\Lambda$ of $\mathbb{N}$ containing $k+1$ elements, let $k_{0}$ be its the maximal element, then, using our induction
hypothesis and the fact that $k_{0} \geq k + 1$,
we have
\begin{equation}\label{chapulin}
\sum_{i \in \Lambda} J(i) =  J(k_{0}) + \sum_{i \in \Lambda \backslash \{k_{0}\}} J(i) \leq J(k+1)  
+ \sum_{i = 1}^{k} J(i) = \sum_{i = 1}^{k+1} J(i).	
\end{equation}
In case we have an equality in equation (\ref{chapulin}), we have
\begin{equation}
0 \leq \sum_{i = 1}^{k} J(i) - \sum_{i \in \Lambda \backslash \{k_{0}\}} J(i) = J(k_{0}) - J(k+1) \leq 0,	
\end{equation}
thus, $\Lambda \backslash \{k_{0}\} = \{1,\dots,k\}$ and $k_{0} = k+1$.
\end{proof}


\bibliographystyle{abbrv}

\end{document}